\renewcommand{\O }{\ensuremath{{\cal O}}}
\newtheorem{theorem}{Theorem}
\newtheorem{lemma}[theorem]{Lemma}
\newtheorem{definition}[theorem]{Definition}
\newcommand{\NATURAL}{\ensuremath{\mathbb{N}}}  
\newcommand{\net}{\ensuremath{\mathfrak{N}}\xspace}
\newcommand{\maxdeg}{\ensuremath{\Delta}} 
\newcommand{\ecc}{\ensuremath{D}\xspace}
\newcommand{\comment}[1]{%
	\text{\phantom{(#1)}} \tag{#1}
}
\author{\textbf{Artur Czumaj} \hspace{4mm} \textbf{Peter Davies} \\[0.10in]
	Department of Computer Science \\
	Centre for Discrete Mathematics and its Applications 
	\\
	University of Warwick}
\title{\textbf{Deterministic Communication in Radio Networks}
	\thanks{Research partially supported by the Centre for Discrete Mathematics and its Applications (DIMAP).}
	\thanks{Contact information: \{A.Czumaj, P.Davies.4\}@warwick.ac.uk. Phone: +44 24 7657 3796.}
}
\begin{document}

\maketitle

\begin{abstract}
In this paper we improve the deterministic complexity of two fundamental communication primitives in the classical model of ad-hoc radio networks with unknown topology: broadcasting and wake-up. We consider an unknown radio network, in which all nodes have no prior knowledge about network topology, and know only the size of the network $n$, the maximum in-degree of any node $\maxdeg$, and the eccentricity of the network $\ecc$.

For such networks, we first give an algorithm for wake-up, based on the existence of small universal synchronizers. This algorithm runs in $O(\frac{\min\{n, \ecc \maxdeg\} \log n \log \maxdeg}{\log\log \maxdeg})$ time, the fastest known in both directed and undirected networks, improving over the previous best $O(n \log^2n)$-time result across all ranges of parameters, but particularly when maximum in-degree is small.

Next, we introduce a new combinatorial framework of block synchronizers and prove the existence of such objects of low size. Using this framework, we design a new deterministic algorithm for the fundamental problem of broadcasting, running in $O(n \log \ecc \log\log\frac{\ecc \maxdeg}{n})$ time. This is the fastest known algorithm for the problem in directed networks, improving upon the $O(n \log n \log \log n)$-time algorithm of De Marco (2010) and the $O(n \log^2\ecc)$-time algorithm due to Czumaj and Rytter (2003). It is also the first to come within a log-logarithmic factor of the $\Omega(n \log \ecc)$ lower bound due to Clementi et al.\ (2003).

Our results also have direct implications on the fastest \emph{deterministic leader election} and \emph{clock synchronization} algorithms in both directed and undirected radio networks, tasks which are commonly used as building blocks for more complex procedures.
\end{abstract}

\section{Introduction}

\subsection{Model of communication networks}

We consider the classical model of \emph{ad-hoc radio networks} with \emph{unknown structure}. A \emph{radio network} is modeled by a (\emph{directed} or \emph{undirected}) network $\net = (V,E)$, where the set of nodes corresponds to the set of transmitter-receiver stations. The nodes of the network are assigned different identifiers (IDs), and throughout this paper we assume that all IDs are distinct numbers in $\{1, \dots, |V|\}$. A directed edge $(v,u) \in E$ means that node $v$ can send a message directly to node $u$. To make propagation of information feasible, we assume that every node in $V$ is reachable in \net from any other.

In accordance with the standard model of unknown (ad-hoc) radio networks (for more elaborate discussion about the model, see, e.g., \cite{-ABLP91,-BGI92,-CGGPR00,-CGR00,-CMS03,-GHK13,-KP03,-KM98,-Pel07}), we make the assumption that a node does not have any  prior knowledge about the topology of the network, its in-degree and out-degree, or the set of its neighbors. We assume that the only knowledge of each node is its own ID, the \emph{size} of the network $n$, the \emph{maximum in-degree} of any node $\maxdeg$, and the \emph{eccentricity} of the network $\ecc$, which is the maximum distance from the source node to any node in \net.

Nodes operate in discrete, synchronous time steps, but we do not need to assume knowledge of a global clock. When we refer to the ``running time'' of an  algorithm, we mean the number of time steps which elapse before completion (i.e., we are not concerned with the number of calculations nodes perform within time steps). In each time step a node can either \emph{transmit} a message to all of its out-neighbors at once or can remain silent and \emph{listen} to the messages from its in-neighbors. Some variants of the model make restrictions upon message size (e.g. that they should be $O(\log n)$ bits in length); our algorithms only forward the source message so comply with any such restriction.

The distinguishing feature of radio networks is the interfering behavior of transmissions. In the most standard radio networks model, the \emph{model without collision detection}  (see, e.g., \cite{-ABLP91,-BGI92,-CMS03,-Pel07}), which is studied in this paper, if a node $v$ listens in a given round and precisely one of its in-neighbors transmits, then $v$ receives the message. In all other cases $v$ receives nothing; in particular, the lack of collision detection means that $v$ is unable to distinguish between zero of its in-neighbors transmitting and more than one.

The model without collision detection describes the most restrictive interfering behavior of transmissions; also considered in the literature is a less restrictive variant, the model with collision detection, where a node listening in a given round can distinguish between zero of its in-neighbors transmitting and more than one (see, e.g., \cite{-GHK13,-Pel07}).

\subsection{Discussion of assumptions of node knowledge}
\label{assump}

We consider the model that assumes that all nodes have knowledge of the parameters $n, \ecc,$ and $\maxdeg$. While these assumption may seem strong, they are standard in previous works when running time dependencies upon the parameters appear. For example, the $O(n \log^2 \ecc)$-time algorithm of \cite{-CR06} requires knowledge of $n$ and $\ecc$, and the $O(\ecc\maxdeg\log\frac n \maxdeg)$-time algorithm of \cite{-CMS03} requires knowledge of $n$ and $\maxdeg$ (though they provide methods of removing these knowledge assumptions at the expense of extra running time factors). Similar assumptions also appear in previous related work.

Furthermore, we note that nodes need only know common upper bounds for the parameters, rather than the exact values (these upper bounds will replace the true values in the running time expression). Therefore, even if only some polynomial upper bound for $\ecc$ is known, and no knowledge about $\maxdeg$ is assumed at all, our broadcasting algorithm still runs within $O(n \log \ecc \log\log\ecc)$ time, and remains the fastest known algorithm. Similarly, with only a polynomial upper bound on $\maxdeg$ and no bound on $\ecc$, our wake-up algorithm still runs in $O(\frac{n\log n \log \maxdeg}{\log\log\maxdeg})$-time. In this latter case, the algorithm is also faster than previous algorithms when only $n$ is known.

For both algorithms (as with all broadcasting and wake-up algorithms with at least linear dependency on $n$) this assumption too can be removed by standard double-and-test techniques, at the cost of never having acknowledgment of completion. The task of achieving acknowledgment in such circumstances is addressed in \cite{-V14}.

Note that to avoid non-well-defined expressions, we will use $\log (x)$ to mean $\min\{1,\log_2(x)\}$ wherever logarithms appear.

\subsection{Communications primitives: broadcasting and wake-up}

In this paper we consider two fundamental communications primitives, namely \emph{broadcasting} and \emph{wake-up}, and consider \emph{deterministic protocols} for each of these tasks.

\subsubsection{Broadcasting}

\emph{Broadcasting} is one of the most fundamental problems in communication networks and has been extensively studied for many decades (see, e.g., \cite{-Pel07} and the references therein).

The premise of the broadcasting task is that one particular node, called the \emph{source}, has a message which must become known to all other nodes. We assume that all other nodes start in a dormant state and do not participate until they are ``woken up'' by receiving the source message (this is referred to in some works as the ``no spontaneous transmissions'' rule). As a result, while the model does not assume knowledge of a global clock, we can make this assumption in practice, since the current time can be appended to the source message as it propagates, and therefore will be known be all active nodes. This is important since it allows us to synchronize node behavior into fixed-length \emph{blocks}.


\subsubsection{Wake-up}

The \emph{wake-up problem} (see, e.g.,  \cite{-JK14}) is a related fundamental communication problem that arises in networks where there is no designated ``source'' node, and no synchronized time-step at which all nodes begin communicating. The goal is for all nodes to become ``active'' by receiving some transmission. Rather than a single source node which begins active, we instead assume that some subset of nodes spontaneously become active at arbitrary time-steps. The task can be seen as broadcast from multiple sources, without the ability to assume a global clock. This last point is important, and results in wake-up protocols being slower than those for broadcast, since nodes cannot co-ordinate their behavior.


\subsection{Related work}

As a fundamental communications primitive, the task of \emph{broadcasting} has been extensively studied for various network models for many decades.

For the model studied in this paper, directed radio networks with unknown structure and without collision detection, the first sub-quadratic deterministic broadcasting algorithm was proposed by Chlebus et al.\ \cite{-CGGPR00}, who gave an $\O(n^{11/6})$-time broadcasting algorithm. After several small improvements (cf. \cite{-CGOR00,-DMP01}), Chrobak et al.\ \cite{-CGR00} designed an almost optimal algorithm that completes the task in $\O(n \log^2n)$ time, the first to be only a poly-logarithmic factor away from linear dependency. Kowalski and Pelc \cite{-KP03} improved this bound to obtain an algorithm of complexity $\O(n \log n \log\ecc)$ and Czumaj and Rytter \cite{-CR06} gave a broadcasting algorithm running in time $O(n \log^2\ecc)$. Finally, De Marco \cite{-DM10} designed an algorithm that completes broadcasting in $O(n \log n \log\log n)$ time steps. Thus, in summary, the state of the art result for deterministic broadcasting  in directed radio networks with unknown structure (without collision detection) is the complexity of $O(n\min\{\log n \log\log n,\log^2\ecc\})$ \cite{-CR06,-DM10}. The best known lower bound is $\Omega(n \log\ecc)$ due to Clementi et al.\ \cite{-CMS03}.

Broadcasting has been also studied in various related models, including undirected networks, randomized broadcasting protocols, models with collision detection, and models in which the entire network structure is known. For example, if the underlying network is undirected, then an $O(n \log \ecc)$-time algorithm due to Kowalski \cite{-K05} exists. If spontaneous transmissions are allowed and a global clock available, then deterministic broadcast can be performed in $O(n)$ time in undirected networks \cite{-CGGPR00}. Randomized broadcasting has been also extensively studied, and in a seminal paper, Bar-Yehuda et al.\ \cite{-BGI92} designed an almost optimal broadcasting algorithm achieving the running time of $\O((\ecc + \log n) \cdot \log n)$. This bound has been later improved by Czumaj and Rytter \cite{-CR06}, and independently Kowalski and Pelc \cite{-KP03b}, who gave optimal randomized broadcasting algorithms that complete the task in $O(\ecc \log \frac{n}{\ecc} + \log^2 n)$ time with high probability, matching a known lower bound from \cite{-KM98}.

Haeupler and Wajc \cite{-HW16} improved this bound for undirected networks in the model that allows spontaneous transmissions and designed an algorithm that completes broadcasting in $O(\ecc \log n \log \log n / \log\ecc + \log^{O(1)}n)$ time with high probability. In the model with collision detection for undirected networks, an $O(D+\log^6 n)$-time randomized algorithm due to Ghaffari et al. \cite{-GHK13} is the first to exploit collisions and surpass the algorithms (and lower bound) for broadcasting without collision detection.

For more details about broadcasting algorithms in various model, see e.g., \cite{-Pel07} and the references therein.

The \emph{wake-up problem} (see, e.g.,  \cite{-JK14}) is a related communication problem that arises in networks where there is no designated ``source'' node, and no synchronized time-step at which all nodes begin communicating. Before any more complex communication can take place, we must first require all nodes to be ``active,'' i.e., aware that they should be communicating. This is the goal of wake-up, and it is a fundamental starting point for most other tasks in this setting, for example leader election and clock synchronization \cite {-CGK07}.

The first sub-quadratic deterministic wake-up protocol was given in by Chrobak et al.\ \cite{-CGK07}, who introduced the concept of \emph{radio synchronizers} to abstract the essence of the problem. They give an $O(n^{5/3} \log n)$-time protocol for the wake-up problem. Since then, there have been two improvements in running time, both making use of the radio synchronizer machinery: firstly to $O(n^{3/2} \log n)$ \cite{-CK04}, and then to $O(n \log^2 n)$ \cite{-CGKR05}. Unlike for the problem of broadcast, the fastest known protocol for directed networks is also the fastest for undirected networks. Randomized wake-up has also been studied (see, e.g., \cite{-CGK07,-JS02}). A recent survey of the current state of research on the wake-up problem is given in \cite{-JK14}.


\subsection{New results}

In this paper we present a \emph{new construction of universal radio synchronizers} and introduce and analyze a \emph{new concept of block synchronizers} to improve the deterministic complexity of two fundamental communication primitives in the model of ad-hoc radio networks with unknown topology: broadcasting and wake-up.

By applying the analysis of block synchronizers, we present a new deterministic broadcasting algorithm (\textbf{Algorithm \ref{alg:BC}}) in directed ad-hoc radio networks with unknown structure, without collision detection, that for any directed network $\net$ with $n$ nodes, with eccentricity $\ecc$, and maximum in-degree $\maxdeg$, completes broadcasting in $O(n \log \ecc \log \log \frac{\ecc \maxdeg}{n})$ time-steps. This result almost matches a lower bound of $\Omega(n \, \log \ecc)$ due to Clementi et al.\ \cite{-CMS03}, and improves upon the previous fastest algorithms due to De Marco \cite{-DM10} and due to Czumaj and Rytter \cite{-CR06}, which require $O(n \log n \log\log n)$ and $O(n \log^2\ecc)$ time-steps, respectively.

Our result reveals that a non-trivial speed-up can be achieved for a broad spectrum of network parameters. Since $\maxdeg \le n$, our algorithm has the complexity at most $O(n \log \ecc \log \log \ecc)$. Therefore, in particular, it significantly improves the complexity of broadcasting for shallow networks, where $\ecc \ll n^{O(1)}$. Furthermore, the dependency on $\maxdeg$ reduces the complexity even further for networks where the product $\ecc\maxdeg$ is near linear in $n$, including sparse networks which can appear in many natural scenarios.

Our broadcasting result has also direct implications on the fastest \emph{deterministic leader election algorithm} in directed and undirected radio networks. It is known that leader election can be completed in $O(\log n)$ times broadcasting time (see, e.g., \cite{-CGR00,-GH13}) (assuming the broadcast algorithm extends to multiple sources, which is the case here as long as we have a global clock), and so our result improves the bound to achieve a deterministic leader election algorithm running in $O(n \log n \log \ecc \log \log \frac{\ecc \maxdeg}{n})$ time. For undirected networks the best result is $O(n \log^{3/2}n \sqrt{\log\log n})$ time \cite{-CKP12} (we note that the $O(n \log \ecc)$ broadcast protocol of \cite{-K05} cannot be used at a $\log n$ slowdown for leader election, since it relies on token traversal and does not extend to multiple sources). Our result therefore favorably compares for shallow networks (for small $\ecc$) even in undirected networks.

We also present a deterministic algorithm (\textbf{Algorithm \ref{alg:WU}}) for the related task of wake-up. We show the existence of universal radio synchronizers of delay $g(k) = O(\frac{n \log n \log k}{\log\log k})$, and demonstrate that this yields a wake-up protocol taking time $O(\frac{\min\{n,\ecc\maxdeg\} \log n \log \maxdeg}{\log\log \maxdeg})$. This improves over the previous best result for both directed and undirected networks, the $O(n \log^2 n)$-time protocol of \cite{-CGKR05}; the improvement is largest when $\maxdeg$ is small, but even when it is polynomial in $n$, our algorithm is a $\log\log n$-factor faster.

Our improved result for wake-up has direct applications to communication algorithms in networks that do not have access to a global clock, where wake-up is an essential starting point for most more complex communication tasks. For example, wake-up is used as a subroutine in the fastest known protocols for fundamental tasks of \emph{leader election} and \emph{clock synchronization} (cf. \cite{-CGK07}). These are two fundamental tasks in networks without global clocks, since they allow initially unsynchronized networks to be brought to a state in which synchronization can be assumed, and results from the better-understood setting with a global clock can then be applied. Our wake-up protocol yields $O(\frac{\min\{n,\ecc\maxdeg\} \log ^2n \log \maxdeg}{\log\log \maxdeg})$-time leader election and clock synchronization algorithms, which are the fastest known in both directed and undirected networks.


\subsection{Previous approaches}

Almost all deterministic broadcasting protocols with sub-quadratic complexity (that is, since \cite{-CGGPR00}) have made use of the concept of \emph{selective families} (or some similar variant thereof, such as selectors). These are families of sets for which one can guarantee that any subset of $[n]:=\{1,2,\dots,n\}$ below a certain size has an intersection of size exactly $1$ with some member of the family. They are useful in the context of radio networks because if the members of the family are interpreted to be the set of nodes which are allowed to transmit in a particular time-step, then after going through each member, any node with an active in-neighbor and an in-neighborhood smaller than the size threshold will be informed. Most of the recent improvements in broadcasting time have been due to a combination of proving smaller selective families exist, and finding more efficient ways to apply them (i.e., choosing which size of family to apply at which time).

One of the drawbacks of selective-family based algorithms is that applying them requires coordination between nodes. For the problem of broadcast, this means that some time may be wasted waiting for the current selective family to finish, and also that nodes cannot alter their behavior based on the time since they were informed, which might be desirable. For the problem of wake-up, this is even more of a difficulty; since we cannot assume a global clock, we cannot synchronize node behavior and hence cannot use selective families at all.

To tackle this issue, Chrobak et al.\ \cite{-CGK07} introduced the concept of \emph{radio synchronizers}. These are a development of selective families which allow nodes to begin their behavior at different times. A further extension to \emph{universal synchronizers} in \cite{-CK04} allowed effectiveness across all in-neighborhood sizes. However, the adaptability to different node start times comes at a cost of increased size, meaning that synchronizer-based wake-up algorithms were slightly slower than selective family-based broadcasting algorithms.

The proofs of existence for selective families and synchronizers follow similar lines: a probabilistic candidate object is generated by deciding on each element independently at random with certain carefully chosen probabilities, and then it is proven that the candidate satisfies the desired properties with positive probability, and so such an object must exist. The proofs are all non-constructive (and therefore all resulting algorithms non-explicit; cf. \cite{-I02,-CK05} for explicits construction of selective families).

Returning to the problem of broadcasting, a breakthrough came in 2010 with a paper by De Marco \cite{-DM10} which took a new approach. Rather than having all nodes synchronize their behavior, it instead had them begin their own unique pattern, starting immediately upon being informed. These behavior patterns were collated into a transmission matrix. The existence of a transition matrix with appropriate selective properties was then proven probabilistically. The ability for a node to transmit with a frequency which decayed over time allowed De Marco's method to inform nodes with a very large in-neighborhood faster, and this in turn reduced total broadcasting time from $O(n \log^2 \ecc)$ \cite{-CR06} to $O(n \log n \log\log n)$.

A downside of this new approach is that having nodes begin immediately, rather than wait until the beginning of the next selector, gives rise to a far greater number of possible starting-time scenarios that have be accounted for during the probabilistic proof. This caused the logarithmic factor in running time to be $\log n$ rather than $\log \ecc$. Furthermore, the method was comparatively slow to inform nodes of low in-degree, compared to a selective family of appropriate size. These are the difficulties that our approach overcomes.


\subsection{Overview of our approach}

Our wake-up result follows a similar line to the previous works; we prove the existence of smaller universal synchronizers than previously known, using the probabilistic method. Our improvement stems from new techniques in analysis rather than method, which allow us to gain a log-logarithmic factor by choosing what we believe are the optimal probabilities by which to construct a randomized candidate.

Our broadcasting result takes a new direction, some elements of which are new and some of which can be seen as a compromise between selective family-type objects and the transmission schedules of De Marco \cite{-DM10}. We first note that nodes of small in-degree can be quickly dealt with by repeatedly applying $(n,\frac n\ecc)$-selective families ``in the background'' of the algorithm. This allows us to tailor the more novel part of the approach to nodes of large in-degree. We have nodes performing their own behavior patterns with decaying transmission frequency over time, but they are semi-synchronized to ``blocks'' of length roughly $\frac n\ecc$, in order to cut down the number of circumstances we must consider. This idea is formalized by the concept of \emph{block synchronizers,} combinatorial objects which can be seen as an extension of the radio synchronizers used for wake-up.

An important new concept used in our analysis of block synchronizers (and also in our proof of small universal synchronizers) is that of \emph{cores}. Cores reduce a set of nodes and starting times to a (usually smaller) set of nodes which are active during a critical period. In this way we can combine many different circumstances into a single case, and demonstrate that for our purposes they all behave in the same way.

The most technically involved part of both of the proofs is the selection of the probabilities with which we generate a randomized candidate object (universal synchronizer or block synchronizer). Intuitively, when thinking about radio networks, a node in our network is aiming to inform its out-neighbors, and it should assume that as time goes on, only those with large in-neighborhoods will remain uninformed (because these nodes are harder to inform quickly). Therefore a node should transmit with ever-decreasing frequency, roughly inversely proportional to how large it estimates remaining uninformed neighbors' in-neighborhoods must be. However, the size of these in-neighborhoods cannot be estimated precisely, and so we must tweak the probabilities slightly to cover the possible range. In block synchronizers we do this using phases of length $O(\log\log\frac{\ecc\maxdeg}{n})$ during which nodes halve their transmission probability every step, but since behavior must be synchronized to achieve this we cannot do the same for radio synchronizers. Instead, we allow our estimate to be further from the true value, and require more time-steps around the same value to compensate.

As with previous results based on selective families, synchronizers, or similar combinatorial structures, the proofs of the structures we give are non-constructive, and therefore the algorithms are non-explicit.


\section{Combinatorial tools}
\label{sec:comb-tools}

Our communications protocols rely upon the existence of objects with certain combinatorial properties, and we will separate these more abstract results from their applications to radio networks. In this section, we will define the combinatorial objects we will need. Next, in Sections \ref{sec:algorithms}--\ref{sec:analysis}, we will demonstrate in detail how these combinatorial objects can be used to obtain fast algorithms for broadcasting and wake-up.


\subsection{Selective families}

We begin with a brief discussion about \emph{selective families}, whose importance in the context of broadcasting was first observed by Chlebus et al.\ \cite{-CGGPR00}. A selective family is a family of subsets of $[n] := \{1,\dots,n\}$ such that every subset of $[n]$ below a certain size has intersection of size exactly $1$ with a member of the family. For the sake of consistency with successive definitions, rather than defining the family of subsets $S_i$, we will instead use the equivalent definition of a set of binary sequences $S^v$ (that is, $S^v_i = 1$ if and only if $v \in S_i$).


For some $m \in \NATURAL$, let each $v \in [n]$ have its own length-$m$ binary sequence $S^v = S^v_0 S^v_1 S^v_2 \dots S^v_{m-1}$.

\begin{definition}
\label{def:SF}
$S= \{S^v\}_{v \in [n]}$ is an $(n,k)$-\textbf{selective family} if for any $X \subseteq [n]$ with $1 \le |X| \le k$, there exists $j$, $0 \le j < m$, such that $\sum_{v \in X}S^v_j = 1$. (We say that such $j$ \emph{hits} $X$.)
\end{definition}


\subsubsection{Existence of small selective families}

The following standard lemma (see, e.g., \cite{-CMS03}) posits the existence of $(n,k)$-selective families of size $O(k \log \frac nk)$. This has been shown to be asymptotically optimal \cite{-CMS03}.

\begin{lemma}[\textbf{Small selective families}]
\label{lem:sf}
For some constant $c$ and for any $1 \le k \le n$, there exists an $(n,k)$-selective family of size at most $m= c k \log \frac{n}{k}$.
\qed
\end{lemma}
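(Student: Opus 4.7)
The plan is a standard probabilistic-method argument, but to obtain the sharper bound $O(k \log \frac{n}{k})$ (rather than the weaker $O(k \log n)$ one would get from a single choice of sampling probability), I would partition the family into geometrically sized blocks that each target a dyadic range of subset sizes.

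Concretely, let $I = \lceil \log_2 k \rceil$. For each $i \in \{0,1,\dots,I\}$, form a random block $B_i$ consisting of $m_i$ independent columns, where in each column of $B_i$ every $v \in [n]$ is set to $1$ independently with probability $p_i = 2^{-i}$. The full random family $S$ is the concatenation $B_0 B_1 \cdots B_I$, of total length $m = \sum_{i=0}^{I} m_i$. The choice of $m_i$ will be $m_i = \lceil C \cdot 2^i \log (2n/2^i) \rceil$ for a suitably large absolute constant $C$.

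The key calculation is that a single column of $B_i$ hits a fixed set $X \subseteq [n]$ of size $s$ with probability
\[
q_{s,i} \;=\; s \, p_i (1-p_i)^{s-1} \;\geq\; \frac{s}{2^i}\bigl(1 - 2^{-i}\bigr)^{s-1}.
\]
For every $s$ in the dyadic window $[2^i, 2^{i+1}) \cap [1,k]$, $s/2^i \in [1,2)$ and $(1-2^{-i})^{s-1} \geq e^{-2}$, so $q_{s,i} \geq \gamma$ for an absolute constant $\gamma > 0$. Hence the probability that $B_i$ fails to hit a fixed $X$ with $|X| = s$ in this window is at most $(1-\gamma)^{m_i}$. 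Using $\binom{n}{s} \leq (en/s)^s$, a union bound over all $X$ of size $s \in [2^i, 2^{i+1}) \cap [1,k]$ gives total failure probability for block $B_i$ at most
\[
\sum_{s=2^i}^{\min\{2^{i+1}-1,\, k\}} \Bigl(\tfrac{en}{s}\Bigr)^{s}(1-\gamma)^{m_i}
\;\leq\; 2^{i+1}\Bigl(\tfrac{en}{2^i}\Bigr)^{2^{i+1}}(1-\gamma)^{m_i}.
\]
For $C$ large enough, $m_i \geq C \cdot 2^i \log(2n/2^i)$ makes this at most $2^{-(i+2)}$. Summing over $i$ gives total failure probability at most $1/2 < 1$, so a realization exists in which every set of size $1 \leq s \leq k$ is hit by some column.

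Finally, I would verify the size bound. Reindexing by $j = I - i$, so that $2^i = k/2^j$, yields
\[
\sum_{i=0}^{I} m_i \;=\; O\!\Biggl(\sum_{j=0}^{I} \frac{k}{2^j}\bigl(\log(n/k) + j\bigr)\Biggr) \;=\; O\bigl(k \log(n/k)\bigr) + O(k),
\]
which is $O(k \log(n/k))$ under the paper's convention $\log(x) = \min\{1,\log_2 x\}$. The main (minor) obstacle is simply getting the constants right in the two competing exponentials so that the union bound closes; nothing in the argument is deep, but the dyadic partition is essential to avoid the slack that a one-shot choice of $p$ incurs when $k \ll n$ but $\log n \gg \log(n/k)$.
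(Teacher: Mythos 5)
Your proof is correct. The paper does not actually prove Lemma~\ref{lem:sf} --- it states it with an immediate \verb|\qed| and defers to a citation of Clementi et al.~\cite{-CMS03} --- so there is no in-paper argument to compare against, but what you give is the standard probabilistic-method proof of this sharper bound, with the one essential twist (dyadic partitioning of target set sizes so each block $B_i$ uses its own sampling probability $p_i = 2^{-i}$) correctly identified and executed.

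A few small points worth tightening if this were to appear verbatim. For $i=0$ you have $p_0=1$, so the hitting probability formula $s p_i(1-p_i)^{s-1}$ is only nonzero for $s=1$; this is harmless because the $i=0$ window is exactly $\{1\}$, but a reader will pause there, so it is worth saying explicitly that $B_0$ handles singletons trivially. The chain $(1-2^{-i})^{s-1}\ge e^{-2}$ does hold on the window $s\le 2^{i+1}-1$ because $(1-1/m)^{m-1}\ge e^{-1}$ for $m\ge 1$, but you should state that you are using the upper end of the window here, since the naive $(1-1/m)^m\le e^{-1}$ inequality points the wrong way and could mislead a careless reader. Finally, when $2^I$ is close to $n$ (i.e., $k$ close to $n$) the factor $\log(2n/2^i)$ is bounded below by a constant only because of the $2$ in the numerator and the paper's convention that $\log$ is always at least $1$; it is worth flagging that this is precisely where the $+O(k)$ term in your final sum gets absorbed. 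None of these affects correctness --- the union-bound arithmetic, the geometric series in the reindexed size computation, and the conclusion $m=O(k\log(n/k))$ all check out.
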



\subsubsection{Application to radio networks}

During the course of radio network protocols we can ``apply'' a selective family $S$ on an $n$-node network by having each node $v$ transmit in time-step $j$ if and only if $v$ has a message it wishes to transmit and $S^v_j = 1$ (see, e.g., \cite{-CGGPR00,-CMS03}). Some previous protocols involved nodes starting to transmit immediately if they were informed of a message during the application of a selective family (or a variant called a selector designed for such a purpose), but here we will require nodes to wait until the current selective family is completed before they start participating. That is, nodes only attempt to transmit their message if they knew it at the beginning of the current application.

The result of applying an $(n,k)$-selective family is that any node $u$ which has between 1 and $k$ active neighbors before the application will be informed of a message upon its conclusion. This is because there must be some time-step $j$ which hits the set of $u$'s active neighbors, and therefore exactly one transmits in that time-step, so $u$ receives a message. This method of selective family application in radio networks was first used in \cite{-CGGPR00}.


\subsection{Radio synchronizers}

Radio synchronizers are an extension of selective families designed to account for nodes in a radio network starting their behavior patterns at different times, and without access to a global clock. They were first introduced in \cite{-CGK07} and used in an algorithm for performing wake-up, and this is also the purpose for which we will apply them.


To define radio synchronizers, we first define the concept of \emph{activation schedule}.

\begin{definition}
\label{def:AS}

An $n$-\textbf{activation schedule} is a function  $\omega : [n] \rightarrow \NATURAL$.
\end{definition}

We will extend the definition to subsets $X \subseteq [n]$ by setting $\omega(X) = \min_{v \in X} \omega(v)$.

As for selective families, let each $v \in [n]$ have its own length-$m$ binary sequence $S^v = S^v_0 S^v_1 S^v_2 \dots S^v_{m-1}$. We then define radio synchronizers as follows:

\begin{definition}
\label{def:radio-synchronizer}
$S= \{S^v\}_{v \in [n]}$ is called an \textbf{$(n,k,m)$-radio synchronizer} if for any activation schedule $\omega$ and for any $X \subseteq [n]$ with $1 \le |X| \le k$, there exists $j$, $\omega(X) \le j < \omega(X)+m$, such that $\sum_{v \in X} S^v_{j-\omega(v)} = 1$.
\end{definition}

One can see that the definition is very similar to that of selective families (Definition \ref{def:SF}), except that now each $v$'s sequence is offset by the value $\omega(v)$. To keep track of this shift in expressions such as the sum in the definition, we will call such values $j$ \emph{columns}. As with selective families, we say that any column $j$ satisfying the condition in Definition \ref{def:radio-synchronizer} \emph{hits $X$}.

In \cite{-CK04}, the concept of radio synchronizers was extended to universal radio synchronizers which cover the whole range of $k$ from $1$ to $n$. Let $g:[n] \rightarrow \NATURAL$ be a non-decreasing function, which we will call the \emph{delay} function. 

\begin{restatable}{definition}{durs}\label{def:urs}
$S= \{S^v\}_{v \in [n]}$ is called an $(n,g)$-\textbf{universal radio synchronizer} if for any activation schedule $\omega$, and for any $X \subseteq [n]$, there exists column $j$, $\omega(X) \le j < \omega(X)+g(|X|)$, such that $\sum_{v \in X} S^v_{j-\omega(v)} = 1$.
\end{restatable}


\subsubsection{New result: Existence of small universal radio synchronizers}
\label{sss:unsy}

We obtain a new, improved construction of universal radio synchronizers, which improves over the previous best result of Chlebus et al.\ \cite{-CGKR05} of universal synchronizers with $g(q) = O(q \log q \log n)$.

\begin{restatable}{theorem}{turs}\label{the:URS}
For any $n \in \NATURAL$, there exists an $(n,g)$-\textbf{universal radio synchronizer} with $g(q) = O(\frac{ \ q \log q \log n}{\log\log q})$.
\end{restatable}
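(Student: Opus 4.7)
The plan is to apply the probabilistic method. Construct a random candidate family $\{S^v\}_{v \in [n]}$ by independently setting $S^v_i = 1$ with probability $p_i$, where $(p_i)$ is a non-increasing step function organized into phases $\Phi_0, \Phi_1, \ldots$: in phase $\Phi_\ell$, which spans $L_\ell = \Theta(2^\ell \ell \log n / \log \ell)$ consecutive positions, we fix $p_i = 2^{-\ell}$. The lengths are calibrated so that the cumulative length through $\Phi_\ell$ is $\Theta(g(2^\ell))$. Intuitively, a set $X$ of size $q \approx 2^\ell$ is best matched by phase $\Phi_\ell$, where $2^{-\ell}$ is close to the optimal transmission rate $1/q$. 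Because probabilities cannot be halved within a phase (as they could in the block-synchronizer construction sketched for broadcast), the phases are made slightly longer to cover a range of relevant starting times; the $\log\log q$ savings in $g$ must be recovered from this slack.

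Fix $X \subseteq [n]$ with $|X| = q$ and an activation schedule $\omega$, and write $\tau_v = \omega(v) - \omega(X) \in [0, g(q))$ for $v \in X$. The key combinatorial device is a \emph{core}: a pair $(X^*, W^*)$ with $X^* \subseteq X$ and $W^* \subseteq [\omega(X), \omega(X)+g(q))$ such that for every column $j \in W^*$ and every $v \in X^*$, the index $j - \omega(v)$ lies in a single phase $\Phi_{\ell^*}$ whose probability $2^{-\ell^*}$ satisfies $2^{-\ell^*}\,|X^*| = \Theta(1)$. The core is built by partitioning the offsets $\{\tau_v\}$ into intervals of length $L_{\ell^*}/2$ and selecting the densest; this yields $|X^*| \geq q L_{\ell^*}/(2g(q))$, and a straightforward calculation shows that a non-empty $W^*$ of length $\Theta(L_{\ell^*})$ exists. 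Choosing $\ell^* \approx \log q$ balances the estimates so that $|X^*| = \Theta(q)$.

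Inside the core, the standard Poisson-approximation argument shows that a single column $j \in W^*$ sees exactly one transmission from $X^*$ with probability $\Omega(1)$. The remaining ingredient is to bound interference from $X \setminus X^*$: nodes whose offsets land in earlier intervals correspond to later, lower-probability phases at column $j$, while nodes with $\tau_v > j - \omega(X)$ have not yet become active at all. One must show that the total transmission mass from outside the core, $\sum_{v \in X \setminus X^*} p_{j - \omega(v)}$, remains $O(1)$ across $W^*$; this may need the core construction to be refined (e.g., iteratively peeling off the densest layer and reapplying pigeonhole, or choosing $W^*$ at the tail end of $\Phi_{\ell^*}$ where earlier-offset nodes have already transitioned to lower-probability phases). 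Once this is secured, each $j \in W^*$ hits $X$ with probability $\Omega(1)$, and because distinct columns use disjoint entries of the random tables, the failure probability across $W^*$ is at most $\exp(-\Omega(L_{\ell^*}))$.

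The proof concludes with a union bound over all possible cores. A core of size $q^*$ is specified by its element set (at most $\binom{n}{q^*}$ choices) together with the offsets of its members (at most $g(q^*)^{q^*}$ choices), giving $\exp(O(q^* \log n))$ possibilities in total. Our single-core failure bound $\exp(-\Omega(L_{\ell^*})) = \exp(-\Omega(q^* \log q^* \log n / \log\log q^*))$ dominates this count by a factor $\log q^* / \log \log q^*$, which is ample to close the union bound when summed over all $q^*$. The main obstacle is the interference analysis: we must design the core so that $X^*$ is simultaneously (i) large enough to give constant hitting probability, (ii) matched to a phase whose length $L_{\ell^*}$ supplies enough sub-window columns for the union bound, and (iii) sufficiently insulated from the complement $X \setminus X^*$. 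The $\log\log q$ improvement over \cite{-CGKR05} comes precisely from the slack opened up by this careful core-based bookkeeping, which would be lost under the coarser analysis of previous work.
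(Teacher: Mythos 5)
Your proposal diverges from the paper's proof in two substantive ways, and the divergence opens a genuine gap that you yourself flag but do not close.

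First, the construction. You use a step function organized into phases $\Phi_\ell$ of length $L_\ell \approx 2^\ell \ell \log n/\log \ell$ with constant probability $2^{-\ell}$ inside each phase. The paper instead sets $\Pr{S^v_j = 1} = \frac{c\log n}{6(j + c\log n)}$, a smoothly decaying probability with no phase structure. These are not cosmetic variants: for a set of size $q$ the paper's probabilities are roughly $\frac{\log n}{j}$, which at column $j \approx g(q)$ is smaller than $1/q$ by the factor $\log q/\log\log q$. The paper \emph{exploits} this mismatch rather than fighting it, which brings us to the second and more serious divergence.

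Second, your notion of core is a \emph{dense sub-interval} $X^* \subseteq X$ chosen by pigeonhole so that $|X^*| \cdot 2^{-\ell^*} = \Theta(1)$ on a window $W^*$, and your target is a constant per-column hit probability. The paper's core (Definition \ref{def:wcore}) is fundamentally different: it consists of \emph{all} of $X$ that activates before the self-referentially defined critical column $j'$, with shifted activation times. There is no densest-interval selection and no attempt to keep the load constant. Instead, the paper's Lemmas \ref{lemma:lb-fC} and \ref{lemma:bound-for-FC} bound the total load $f_C(j)$ in the range $\bigl(\frac{\log\log|C|}{12\log|C|},\ \frac12\log\log|C|\bigr)$ on a set $\mathcal F_C$ of $\Theta(|C|\log n\log|C|/\log\log|C|)$ columns, accepting that the per-column hit probability is only $\Theta(\log\log|C|/\log|C|)$ and compensating with the large number of usable columns. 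This yields a failure probability of $n^{-\Omega(|C|)}$, which is exactly enough (not dramatically more) to beat the $n^{3|C|}$ possible cores.

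The gap in your argument is the interference from $X \setminus X^*$, which you acknowledge but whose sketched fixes do not hold up. Nodes in $X$ with offsets $\tau_v$ just below $j - \omega(X)$ for $j \in W^*$ land in \emph{early} phases (index $j - \omega(v)$ small, probability near $1$), and the adversary is free to place many such nodes. Bounding $\sum_{v \in X\setminus X^*} p_{j-\omega(v)} = O(1)$ is false in general: this sum is precisely what the paper's load $f_C(j)$ measures (restricted to the non-$X^*$ part), and the paper shows it can be as large as $\Theta(\log\log q)$ and that one cannot avoid columns of this load. Your suspiciously strong single-core failure bound $\exp(-\Omega(q\log q\log n/\log\log q))$ is a symptom of this: if $\Omega(1)$-hit probability per column on $\Theta(L_{\ell^*})$ columns were achievable, the union-bound slack would be a factor of $\log q/\log\log q$ beyond what is needed, which should be a red flag. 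Moreover, even setting aside the bound itself, the union bound as you've phrased it is unsound: the failure probability for a given core $(X^*,W^*)$ depends on the unbounded portion $X\setminus X^*$ and the full $\omega$, so enumerating cores alone does not cover all cases. To repair this you would need to either (i) show the interference bound \emph{uniformly} over all $X\supseteq X^*$ and all $\omega$ consistent with the core, which as argued is false, or (ii) redefine the core to contain everything that can interfere, at which point you have recovered the paper's definition and should abandon the constant-load target in favor of the paper's $O(\log\log q)$-load analysis.
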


Our approach will be to randomly generate a candidate synchronizer, and then prove that with positive probability it does indeed satisfy the required property. Then, for this to be the case, at least one such object must exist.  We will prove Theorem \ref{the:URS} in Section \ref{sss:proof-unsy}.


\subsubsection{Application of universal radio synchronizers to radio networks}

One can apply universal radio synchronizers to the problem of wake-up in radio networks by having $\omega(v)$ represent the time-step in which node $v$ becomes active during the course of a protocol (either spontaneously or by receiving a transmission). Subsequently, $v$ interprets $S^v$ as the pattern in which it should transmit, starting immediately from time-step $\omega(v)$. That is, in each time-step $j$ after activation, $v$ checks the next value in $S^v$ (i.e., $S^v_{j-\omega(v)}$), transmits if it is \textbf 1 and stays silent otherwise. Then, the selective property specified by the definition guarantees that any node $u$ with an in-neighborhood of size $q$ hears a transmission within at most $g(q)$ steps of its first in-neighbor becoming active.

We will present this approach in details in Section \ref{subsec:alg-wake-up}, where we will obtain a new, improved algorithm for the wake-up problem.


\subsection{Block synchronizers}

Next, we introduce \emph{block synchronizers}, which are a new type of combinatorial object designed for use in a fast broadcasting algorithm. They can be seen as an extension of both radio synchronizers and the transmission matrix formulation of De Marco \cite{-DM10}.


Let $\omega$ be an $n$-activation schedule (cf. Definition \ref{def:AS}). Let each $v \in [n]$ have its own length-$m$ binary sequence $S^v = S^v_0 S^v_1 S^v_2 \dots S^v_{m-1}$. For any fixed $B$, define a function $\mu_B : \NATURAL \rightarrow \NATURAL$ which rounds its input up to the next multiple of $B$, that is, $\mu_B(x) = \min\{pB : p \ge \frac{x}{B}, p \in \NATURAL\}$; we will call $s(v) := \mu_B(\omega(v))$ the \emph{start column} of $v$. We extend $s$ to subsets of $[n]$ in the obvious way, $s(X) = \mu_B(\omega(X))$.

\begin{restatable}{definition}{bs}\label{def:bs}
\label{def:block-synchronizer}
$S = \{S^v\}_{v \in [n]}$ is an $(n,\maxdeg,r,B)$-\textbf{block synchronizer} if for any activation schedule $\omega$ and any set $X \subseteq [n]$ with $ |X|\le \maxdeg$, there exists a column $j$, $s(X) \le j < s(X) + B \cdot \lceil\frac {|X|}{r}\rceil$, such that $\sum_{v \in X} S^v_{j-s(v)} = 1$.
\end{restatable}

Block synchronizers differ from radio synchronizers in two ways: Firstly, on top of the offsetting effect of the activation schedule, there is also the function $\mu_B$ that effectively~``snaps'' behavior patterns to blocks of size $B$, hence the name block synchronizer. Secondly, the size of the range in which we must hit $X$ is linearly dependent on $|X|$. This could be generalized to a generic non-decreasing function $g(|X|)$ as with universal radio synchronizers, but here for simplicity we choose to use the specific function which works best for our broadcasting application. The parameter $r$ is the increment by which each block increases the size of sets we can hit.


\subsubsection{New result: Existence of small block synchronizers}
\label{subsec:esbs}

We will show the existence of small block synchronizers in the following theorem.

\begin{restatable}{theorem}{tbs}\label{thm:bs}
For any $n, D, \maxdeg \in \NATURAL$ with $\ecc$, $\maxdeg \le n<D \maxdeg$, there exists an $(n, \maxdeg, \frac{n}{D}, O( \frac{n}{D}\log D \log\log \frac{D \maxdeg}{n}))$-\textbf{block synchronizer}.
\end{restatable}

We will prove the existence of a small block synchronizer by randomly generating a candidate $S$, and proving that it indeed has the required properties with positive probability, in a similar fashion to the proof of small radio synchronizers. We will prove Theorem \ref{thm:bs} in Section \ref{subsec:proof-esbs}.


\subsubsection{Application of block synchronizers to radio networks}

The idea of our broadcasting algorithm will be that any node $v$ waits until the start of the first block after its activation time $\omega(v)$, and then begins its transmission pattern $S^v$. The definition of block synchronizer aims to model this scenario. The hitting condition ensures that any node with an in-neighborhood of size $q \le \maxdeg$ will be informed within $B\lceil\frac{q}{r}\rceil$ time-steps of the start of the block in which its first in-neighbor begins transmitting.

We will present this approach in details in Section \ref{subsec:alg-broadcasting}, where we will obtain a new, improved algorithm for the broadcasting problem.


\section{Algorithms for broadcasting and wake-up}
\label{sec:algorithms}

In this section we use the machinery developed in the previous section to design our algorithms for broadcasting and wake-up in radio networks.


\subsection{Broadcasting}
\label{subsec:alg-broadcasting}

We will assume that $\ecc\maxdeg> n$, otherwise an earlier \sloppy{$O(\ecc\maxdeg\log\frac n\maxdeg)$-time} protocol from \cite{-CMS03} can be used to achieve $O(\ecc\maxdeg\log\frac{n}{\maxdeg}) = O(n \log \ecc)$ time.

Let $\mathcal{S}$ be an $(n,\maxdeg,\frac n \ecc,\mathcal{B})$-block synchronizer, with $\mathcal{B} = c \frac n\ecc\log \ecc \log\log \frac{\ecc \maxdeg}{n}$ (cf. Theorem \ref{thm:bs}), and recall that $\mu_{\mathcal{B}}(x) = \min\{p\mathcal{B}: p \ge \frac{x}{\mathcal{B}}, p \in \NATURAL\}$, i.e. the start of the first block after $x$. We will say that the source node becomes active at time-step $0$, and any other node $v$ becomes active in a time-step $i$ if it received its first transmission at time-step $i-1$. Our broadcasting algorithm is the following (Algorithm \ref{alg:BC}):

\begin{algorithm}[H]
\caption{Broadcast at a node $v$}
\label{alg:BC}
\begin{algorithmic}
\State Let $i$ be the time-step in which $v$ becomes active
\For {$j$ from $0$ to $D\mathcal{B}-1$, in time-step $\mu_\mathcal{B}(i)+j$}
	\State $v$ transmits source message iff $\mathcal{S}^v_{j} = 1$
\EndFor	
\end{algorithmic}
\end{algorithm}	


\subsection{Wake-up}
\label{subsec:alg-wake-up}

Let $S$ be an $(n,g)$-{universal radio synchronizer} with $g(q) = \frac{cq \log q \log n}{\log\log q}$ (cf. Theorem \ref{the:URS}). We will say that a node $v$ becomes active in a time-step $i$ if it either spontaneous wakes up at $i$, or received its first transmission at time-step $i-1$. Our wake-up algorithm is the following (Algorithm \ref{alg:WU}):

\begin{algorithm}[H]
\caption{Wake-up at a node $v$}
\label{alg:WU}
\begin{algorithmic}
\State Let $i$ be the time-step in which $v$ becomes active
\For {$j$ from $0$ to $g(n)-1$, in time-step $i+j$}
    \State$v$ transmits source message iff $S^v_{j} = 1$
\EndFor	
\end{algorithmic}
\end{algorithm}	


\section{Analysis of broadcasting and wake-up algorithms}
\label{sec:analysis}

In this section we show that our algorithms for broadcasting and wake-up have the claimed running times. Our analysis critically relies on the constructions of small block synchronizers and small universal radio synchronizers, as presented in Theorems \ref{thm:bs} and \ref{the:URS}.

We begin with the analysis of the broadcasting algorithm.

\sloppy{
\begin{theorem}
\label{the:bc}
Algorithm \ref{alg:BC} performs broadcast in $O(n\log \ecc \log\log\frac{\ecc \maxdeg}{n})$ time-steps.
\end{theorem}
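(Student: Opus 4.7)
The plan is to induct on the BFS distance $\ell$ from the source. Let $L_\ell$ denote the set of nodes at BFS distance $\ell$, and set $T_\ell = \max_{u \in L_\ell} \tau(u)$, where $\tau(u)$ is $u$'s activation time. The base case $T_0 = 0$ follows from the convention that the source is active at time $0$. The goal is to show $T_\ecc = O(n \log \ecc \log\log \frac{\ecc\maxdeg}{n}) = O(\ecc \mathcal{B})$.

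For the inductive step, I would fix a node $u \in L_\ell$ and apply Theorem~\ref{the:BS} to the set $X = N^-(u)$; since $|X| \le \maxdeg$, the block synchronizer guarantees a column $j \in [s(X), s(X) + \mathcal{B}\lceil|X|\ecc/n\rceil)$ at which exactly one in-neighbor of $u$ transmits, thereby activating $u$. Because $u$ has at least one in-neighbor at BFS distance $\ell - 1$, we have $\omega(X) \le T_{\ell-1}$, and hence $s(X) = \mu_\mathcal{B}(\omega(X)) \le T_{\ell-1} + \mathcal{B}$. Combining these, $\tau(u) \le T_{\ell-1} + \mathcal{B} + \mathcal{B}\lceil|N^-(u)|\ecc/n\rceil$, and taking the maximum over $u \in L_\ell$ yields the recurrence $T_\ell \le T_{\ell-1} + \mathcal{B} + \mathcal{B}\lceil\Delta_\ell^{\max}\ecc/n\rceil$, where $\Delta_\ell^{\max}$ denotes the maximum in-degree at level $\ell$.

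Unrolling the recurrence gives $T_\ecc \le \ecc\mathcal{B} + \mathcal{B}\sum_{\ell=1}^\ecc\lceil\Delta_\ell^{\max}\ecc/n\rceil$, whose first term already matches the claim. The main obstacle is to show the second sum is also $O(\ecc\mathcal{B})$. A naive worst-case bound $\sum_\ell\Delta_\ell^{\max} \le \ecc\maxdeg$ would give only $O(\ecc\mathcal{B}\cdot\ecc\maxdeg/n)$, which exceeds the target whenever $\ecc\maxdeg > n$. To meet the claim I would refine the induction so that the in-degree penalty $\mathcal{B}\lceil|N^-(u)|\ecc/n\rceil$ is charged only once along the BFS chain --- namely, at the terminal node $u$ --- giving a per-node bound $\tau(u) = O(\ecc\mathcal{B}) + \mathcal{B}\lceil|N^-(u)|\ecc/n\rceil$. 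Since $\maxdeg \le n$, the trailing in-degree term is then at most $\mathcal{B}\ecc = n\log\ecc\log\log(\ecc\maxdeg/n)$, and the overall bound matches the theorem. Ensuring this non-accumulating charge scheme --- either by selecting, at each recursive step, an in-neighbor of small in-degree when one is available, or by an amortized potential argument that credits the work of activating a high-in-degree intermediate node against the many downstream activations it enables --- is where the technical heart of the analysis lies.
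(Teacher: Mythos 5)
Your proposal correctly identifies the overall plan (use Theorem~\ref{the:BS} to bound per-level progress, then aggregate), and the recurrence you derive, $T_\ell \le T_{\ell-1} + \mathcal{B} + \mathcal{B}\lceil\Delta_\ell^{\max}\ecc/n\rceil$, is valid. But the gap you flag at the end is real and is precisely the heart of the theorem, and the fixes you sketch do not close it. The difficulty is that applying the block synchronizer to $X = N^-(u)$ at every BFS level means the quantity you pay for, $\sum_\ell \Delta_\ell^{\max}$, can be as large as $\ecc\maxdeg$, which exceeds $n$ exactly when the theorem is nontrivial. Your proposed ``charge only at the terminal node'' is not a consequence of anything established: the inductive bound $T_{\ell-1}$ genuinely accumulates all the prior penalties, and there is no reason the in-degree cost at an intermediate node can be absorbed retroactively. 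Likewise, ``pick an in-neighbor of small in-degree'' fails because $u$ hears a message iff exactly one of \emph{all} its active in-neighbors transmits; you cannot localize the analysis to one chosen predecessor without controlling interference from the rest of $N^-(u)$. The amortization suggestion is not worked out and I do not see how to make it go through.

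The paper resolves this with a different decomposition. Fix a target $v$ and a shortest source-to-$v$ path $P = (P_0,\dots,P_{d})$, and put each node $u$ in layer $L_\ell$ where $\ell$ is the \emph{largest} index $i$ such that $u$ is an in-neighbor of $P_i$. Two properties make this work. First, the $L_\ell$ are pairwise disjoint, so $\sum_\ell |L_\ell| \le n$; this replaces your unbounded sum $\sum_\ell \Delta^{\max}_\ell$ with a sum that is at most $n$, and a linear $h'(q) = \mathcal{B}(q+2r)/r$ then gives $\sum_\ell h'(|L_\ell|) \le 3\ecc\mathcal{B}$ directly. Second, when $L_\ell$ is the foremost (``leading'') layer containing an active node, every active in-neighbor of $P_\ell$ must lie in $L_\ell$ (layers $> \ell$ are inactive, and by the max-index definition no in-neighbor of $P_\ell$ sits in a layer $< \ell$); so a block-synchronizer hit of $X = L_\ell$ is a genuine uninterfered transmission to $P_\ell$ and advances the leading layer. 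This layer-wise interference argument is what lets the paper apply the synchronizer to a set that is both small-in-aggregate and safe to isolate, and it is the missing ingredient in your sketch. Your proof, as written, does not establish the claimed bound.
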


To begin the analysis, fix some arbitrary node $v$ and let $P$ be a shortest path from the source (or first informed node) $x$ to $v$. Number the nodes in this path consecutively, e.g., $P_0$ = $x$ and $P_{dist(x,v)} = v$. Classify all other nodes into \emph{layers} dependent upon the furthest node along the path $P$ to which they are an in-neighbor (some nodes may not be an in-neighbor to any node in $P$; these can be discounted from the analysis). That is, $\text{layer } L_\ell = \{u\in V: \max_{u \text{ in-neighbour to }P_i}i = \ell\}$ for $\ell \le dist(x,v)$. We separately define layer $L_{dist(x,v)+1}$ to be $\{v\}$.

(For a depiction of layer numbering, see Figure \ref{figure:layers}.)
\begin{figure}[h]
\begin{center}
\includegraphics[width=0.7\textwidth]{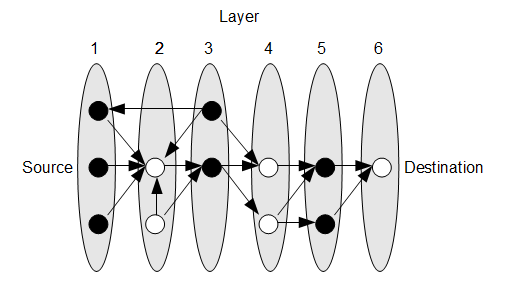}
\end{center}
\caption{An example of layer numbering.}
\label{figure:layers}
\end{figure}

At any time step, we call a layer \emph{leading} if it is the foremost layer containing an active node, and our goal is to progress through the network until the final layer is leading, i.e., $v$ is active. The use of layers allows us to restrict to the set of nodes of our main interest: if we focus on the path node whose in-neighborhood contains the leading layer, we cannot have interference from earlier layers since they contain no in-neighbors of this path node, and we cannot have interference from later layers since they are not yet active.

\begin{lemma}
\label{lem:algtime}
Let $h:[\maxdeg]\rightarrow \NATURAL$ be a non-decreasing function, and define $T(n, \ecc, \maxdeg, h)$ to be the supremum of the function $\sum_{i=1}^{\ecc} h(q_i)$, where integers $1 \le q_i \le \maxdeg$ satisfy the additional constraint $\sum_{i=1}^{\ecc} q_i \le n$. If a broadcast or wake-up protocol ensures that any layer (under any choice of $v$) of size $q$ remains leading for no more than $h(q)$ time-steps, then all nodes become active within $T(n, \ecc, \maxdeg, h)$ time-steps.
\end{lemma}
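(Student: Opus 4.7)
The plan is to fix an arbitrary non-source node $v$ and show that $v$ becomes active within $T(n,\ecc,\maxdeg,h)$ time-steps; taking this over all $v$ establishes the lemma. I would work with the layer decomposition $L_0, L_1, \ldots, L_{dist(x,v)+1}$ relative to a shortest source-to-$v$ path $P = P_0 P_1 \cdots P_{dist(x,v)}$ set up above.

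The key observation is that the index of the leading layer is non-decreasing in time, since once any node in a layer has become active, that layer retains an active node forever, and the leading layer can only advance (its index strictly increases) when it changes. Hence the distinct leading layers encountered before $v$ itself becomes active can be listed in increasing order as $L_{\ell_1}, L_{\ell_2}, \ldots, L_{\ell_k}$ with $\ell_1 < \ell_2 < \cdots < \ell_k \le dist(x,v)$. Because $x = P_0$ is an in-neighbor of $P_1$, the source already belongs to a layer of index at least $1$, so $\ell_1 \ge 1$, yielding the crucial count bound $k \le dist(x,v) \le \ecc$.

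By the hypothesis of the lemma, $L_{\ell_j}$ remains the leading layer for at most $h(q_j)$ time-steps, where $q_j := |L_{\ell_j}|$; summing, $v$ is active no later than time $\sum_{j=1}^k h(q_j)$. The values $q_j$ obey $1 \le q_j \le \maxdeg$ (each $L_{\ell_j}$ is non-empty and contained in the in-neighborhood of $P_{\ell_j}$) and $\sum_j q_j \le n$ (the layers are pairwise disjoint subsets of $V$). Extending the tuple to an $\ecc$-tuple feasible in the definition of $T$ — appending entries equal to $1$ when the budget permits, and otherwise mildly shrinking a large entry to create room — matches this configuration against the supremum, giving $\sum_{j=1}^k h(q_j) \le T(n,\ecc,\maxdeg,h)$.

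The main technical obstacle is the careful bookkeeping of leading layers: the bound $k \le \ecc$ rests on noticing that the source sits in a layer of index at least $1$, which shaves an off-by-one from the naive bound $\ecc+1$; and reconciling shorter sequences with the length-$\ecc$ supremum in the definition of $T$ calls for verifying that the padding (or padding-plus-shrinking) step produces an admissible configuration whose $h$-sum is at least as large as $\sum_{j=1}^k h(q_j)$.
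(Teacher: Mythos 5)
Your overall strategy is the same as the paper's: fix an arbitrary $v$, observe that the leading layer index only advances along the path, sum the per-layer bounds $h(q_i)$, and compare against $T$. Your extra care in tracking only the distinct leading layers $L_{\ell_1},\dots,L_{\ell_k}$ (rather than summing over all $L_1,\dots,L_{dist(x,v)}$) is a mild tightening but does not change the argument.

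The one step that would not survive scrutiny is the ``padding-plus-shrinking'' maneuver. If the disjoint layer sizes $q_1,\dots,q_k$ already nearly exhaust the budget $n$, you propose shrinking some $q_j$ to $q_j'<q_j$ to make room for the mandatory $\ecc-k$ appended entries $\ge 1$. But $h$ is non-decreasing, so $h(q_j')\le h(q_j)$; the resulting admissible $\ecc$-tuple has $h$-sum no larger than $\sum_{j=1}^k h(q_j)$, whereas you need it to be at least as large in order to conclude $\sum_{j=1}^k h(q_j)\le T$. So the shrinking step certifies nothing. Concretely, with $n=10$, $\ecc=5$, $q_1=q_2=4$, any five positive integers summing to $\le 10$ can contain at most one entry equal to $4$, so no admissible configuration dominates $(4,4)$.

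The real resolution is that the constraint in the definition of $T$ should be read as $0\le q_i\le\maxdeg$ (with the convention $h(0)=0$), not $1\le q_i\le\maxdeg$; indeed the paper itself sets $q_i=0$ in the subsequent application to Theorem~\ref{the:wu}. Under that reading you simply append zeros, the budget constraint is never violated, and your argument goes through. In other words, your instinct that there is a bookkeeping obstacle here is right, but shrinking is the wrong fix; the fix is to allow zero-sized (i.e.\ absent) layers in the supremum, which is what the paper's terse proof and its later uses of $T$ implicitly assume.
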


\begin{proof}
Let $q_i = |L_i|$. Layer $L_{dist(x,v)+1}$ must be leading (and thus node $v$ active) once no other layers are leading, and so this occurs within $\sum_{i=1}^{dist(x,v)}h(q_i)$ time-steps after layer $L_1$ becomes leading. Since $\sum_{i=1}^{dist(x,v)}h(q_i) \leq \sum_{i=1}^{\ecc}h(q_i)$ and $\sum_{i=1}^{\ecc} q_i \le n$, this is no more than $T(n, \ecc, \maxdeg, h)$ time-steps.

Since $v$ was chosen arbitrarily, all nodes must be active within $T(n, \ecc, \maxdeg, h)$ time-steps of $x$ becoming active.
\end{proof}

We make use of Lemma \ref{lem:algtime} to give bounds on the running times of our algorithms:

\begin{lemma}
Algorithm \ref{alg:BC} ensures that any layer of size $q$ remains leading for fewer than $\mathcal{B}\lceil\frac{q+r}{r}\rceil$ time-steps.
\end{lemma}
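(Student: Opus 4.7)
My plan is to apply the block synchronizer $\mathcal{S}$ directly to the layer $L_\ell$ itself. The key structural observation is that, by definition of layers, every node of $L_\ell$ is an in-neighbor of $P_\ell$, so $q = |L_\ell| \le \maxdeg$ and I may legitimately set $X := L_\ell$ in Definition~\ref{def:block-synchronizer}, using the true activation schedule $\omega$ of the execution. This will produce the bound $\mathcal{B}\lceil q/r\rceil$ inside the synchronizer window, together with an additive $\mathcal{B}-1$ slack coming from the rounding operator $\mu_{\mathcal{B}}$.

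Concretely, let $t_0$ denote the instant at which $L_\ell$ first contains an active node---this is exactly when $L_\ell$ becomes leading, so $\omega(L_\ell) = t_0$ and $s(L_\ell) = \mu_{\mathcal{B}}(t_0) \le t_0 + \mathcal{B} - 1$. Invoking the block synchronizer on $L_\ell$ then yields a column $j \in [s(L_\ell),\, s(L_\ell) + \mathcal{B}\lceil q/r\rceil)$ at which exactly one node $v^* \in L_\ell$ has $\mathcal{S}^{v^*}_{j - s(v^*)} = 1$. Under the standing convention that sequence entries with negative index are $0$, this forces $j \ge s(v^*)$, so $v^*$ has already begun executing its transmission pattern and genuinely transmits at column $j$.

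Next I would lift this to a reception at $P_\ell$. If $L_\ell$ has already ceased to lead before column $j$, we are done with room to spare; otherwise every layer $L_{\ell'}$ with $\ell' > \ell$ is still empty of active nodes. Since, by the layer definition, the in-neighbors of $P_\ell$ that are not in $L_\ell$ lie precisely in such later layers, no competing in-neighbor of $P_\ell$ transmits at column $j$. Hence $P_\ell$ hears $v^*$'s message at column $j$, becomes active at column $j+1$, and---since the shortest-path property forces $P_\ell$ to be an in-neighbor of $P_{\ell+1}$ and of no later path node, so $P_\ell \in L_{\ell+1}$---the layer $L_{\ell+1}$ now holds an active node and $L_\ell$ is no longer leading. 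Combining $j < s(L_\ell) + \mathcal{B}\lceil q/r\rceil$ with $s(L_\ell) \le t_0 + \mathcal{B} - 1$ and the identity $\lceil q/r\rceil + 1 = \lceil(q+r)/r\rceil$ then gives the leading duration $j + 1 - t_0 < \mathcal{B}\lceil(q+r)/r\rceil$.

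The main obstacle is the bookkeeping linking the abstract combinatorial guarantee to the actual protocol semantics: I must verify that the unique hitting node $v^*$ is genuinely transmitting rather than being counted merely via the padded-zero convention; that no other in-neighbor of $P_\ell$ fires at column $j$; and that $P_\ell$'s activation actually advances the leading frontier. All three are handled by the three observations above---the negative-index convention, the emptiness of later layers while $L_\ell$ leads, and the shortest-path membership $P_\ell \in L_{\ell+1}$---after which the length bound drops straight out of the synchronizer window plus the $\mu_{\mathcal{B}}$ slack.
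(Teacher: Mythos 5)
Your proof is correct and follows the same argument as the paper: apply the block synchronizer to $X = L_\ell$ under the actual activation schedule, use the hitting column $j < s(L_\ell) + \mathcal{B}\lceil q/r\rceil$ together with the case analysis (either $L_\ell$ has already ceased to lead, or no later-layer in-neighbor of $P_\ell$ is active so $P_\ell$ receives the message), and absorb the $\mu_{\mathcal{B}}$ rounding slack $s(L_\ell) - \omega(L_\ell) < \mathcal{B}$ via the identity $\lceil q/r\rceil + 1 = \lceil(q+r)/r\rceil$. You are somewhat more explicit than the paper about the negative-index convention and the shortest-path fact that $P_\ell \in L_{\ell+1}$, but the underlying reasoning is identical.
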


\begin{proof}
For all nodes $w$, let $\omega(w)$ be the time-step that $w$ becomes active during the course of the algorithm. By definition of a block selector, for any layer $L_i$ of size $q_i$ there is a time-step $j < s(L_i)+B\lceil\frac{q_i}{r}\rceil$ in which exactly one element of $L_i$ transmits. Then, either path node $P_i$ hears the transmission (and so layer $L_i$ is no longer leading in time-step $j+1$), or $P_i$ has active in-neighbors not in $L_i$, in which case these must be in a later layer so $L_i$ is not leading. Thus, $L_i$ can remain leading for no more than $s(L_i)+\mathcal{B}\lceil\frac{q_i}{r}\rceil-\omega(L_i) < \mathcal{B}\lceil\frac{q_i+r}{r}\rceil$ time-steps.
\end{proof}

With these tools, we are now ready to complete the proof of Theorem \ref{the:bc}.

\begin{proof}[\rm\textbf{Proof of Theorem \ref{the:bc}}]
By Lemma \ref{lem:algtime}, Algorithm \ref{alg:BC} ensures that all nodes are active (and have therefore heard the source message) within $T(n, \ecc, \maxdeg, h)$ time-steps, where $h(q) = \mathcal{B}\lceil\frac{q+r}{r}\rceil$. We will use an upper bound $T(n, \ecc, \maxdeg, h')$, where $h'(q) = \mathcal{B}\frac{q+2r}{r}$. Since $h'$ is linear and increasing, $\sum_{i=1}^{\ecc} h'(q_i)$ subject to $\sum_{i=1}^{\ecc} q_i \le n$ is maximized whenever $\sum_{i=1}^{\ecc} q_i = n$, for example at $q_i = \frac{n}{\ecc}$ for all $i \in [\ecc]$. So, the algorithm completes broadcast within

\begin{displaymath}
    \sum_{i=1}^{\ecc} h'(\frac{n}{\ecc})
        =
    \sum_{i=1}^{\ecc} \mathcal{B}\frac{\frac{n}{\ecc} +2r}{r}
        =
    3\mathcal{B}\ecc
        =
    3c'n\log \ecc \log\log\frac{\ecc\maxdeg}{n}
\end{displaymath}

time-steps.
\end{proof}

In a similar way, we can analyze Algorithm \ref{alg:WU}:

\begin{theorem}
\label{the:wu}
Algorithm \ref{alg:WU} performs wake-up in $O(\frac{\min(n,\ecc\maxdeg) \log n \log \maxdeg}{\log\log \maxdeg})$ time-steps.
\end{theorem}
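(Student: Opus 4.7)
The plan is to follow the same template as the broadcast analysis, replacing the block synchronizer by the universal radio synchronizer in the role that bounds how long a layer can remain \emph{leading}, and then invoking Lemma \ref{lem:algtime}. To set up, I would fix an arbitrary target node $v$, pick a shortest path from the first-activated node $x$ to $v$, and define layers $L_1,\dots,L_{\text{dist}(x,v)+1}$ exactly as in the broadcast proof. For wake-up, the activation schedule $\omega$ is simply the realized wake-up times during the execution of Algorithm \ref{alg:WU} (spontaneous or one past the first received transmission), and $|L_i|\le\maxdeg$ since $L_i$ is contained in the in-neighborhood of the path node $P_i$.

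The key intermediate step, analogous to the per-layer bound in the broadcast proof, is that any layer $L_i$ of size $q$ stays leading for at most $g(q)$ time-steps. Applying Definition \ref{def:urs} to $X=L_i$ yields a column $j$ with $\omega(L_i)\le j<\omega(L_i)+g(q)$ at which exactly one node of $L_i$ transmits. Every active in-neighbor of $P_i$ at that moment lies either in $L_i$ (so $P_i$ receives the transmission and a strictly later layer becomes leading) or in some $L_{i'}$ with $i'>i$ (so $L_i$ was not truly leading). Either way, $L_i$ cannot remain leading beyond $g(q)$ steps after its first node wakes.

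Plugging $h:=g$ into Lemma \ref{lem:algtime}, the theorem reduces to establishing
\[
T(n,\ecc,\maxdeg,g) \;=\; O\!\left(\frac{\min(n,\ecc\maxdeg)\log n\log\maxdeg}{\log\log\maxdeg}\right).
\]
The central observation is that $g(q)/q = c\log q\log n/\log\log q$ is non-decreasing in $q$, which makes $g$ superadditive on $[1,\maxdeg]$: $g(a)+g(b)\le g(a+b)$ whenever $a+b\le\maxdeg$. Consequently, any feasible configuration $(q_i)$ can only be improved by merging two entries into one capped at $\maxdeg$, so an extremal configuration has all but at most one entry equal to $\maxdeg$. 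The number of such full entries is bounded by both $\ecc$ (number of layers) and $\lceil n/\maxdeg\rceil$ (from $\sum q_i\le n$), giving $T = O(\min(\ecc,n/\maxdeg)\cdot g(\maxdeg))$, which simplifies to the stated bound.

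The main obstacle I anticipate is formalizing the ``concentrate mass at $\maxdeg$'' step; the cleanest route is an explicit exchange argument that repeatedly replaces two entries $q_a,q_b$ with $\min(\maxdeg,q_a+q_b)$ and $\max(0,q_a+q_b-\maxdeg)$, noting that superadditivity ensures the objective never decreases and that the number of nonzero entries is non-increasing, so we reach a configuration of the form claimed. A minor nuisance is the small-$q$ regime where $\log\log q$ could be ill-defined; this is handled by the paper's convention of bounding each logarithm from below at $1$.
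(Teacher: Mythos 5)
Your overall strategy matches the paper's: set up layers along a shortest path, show via the selection property of the $(n,g)$-universal radio synchronizer that a leading layer of size $q$ is cleared within $g(q)$ steps, feed $h=g$ into Lemma~\ref{lem:algtime}, and then bound $T(n,\ecc,\maxdeg,g)$ by concentrating mass at $\maxdeg$. One step in your exchange argument, however, is not justified by superadditivity: in the case $q_a+q_b>\maxdeg$ you replace $(q_a,q_b)$ by $\left(\maxdeg,\ q_a+q_b-\maxdeg\right)$, and for the objective not to drop you would need $g(q_a)+g(q_b)\le g(\maxdeg)+g(q_a+q_b-\maxdeg)$. That is a second-difference (convexity-type) inequality; it does not follow from $g(a)+g(b)\le g(a+b)$, which only covers the uncapped merge. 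Indeed, the star-shaped inequality $g(q)/q \le g(\maxdeg)/\maxdeg$ (for $q\le\maxdeg$) pushes the other way in that comparison, so the claim "superadditivity ensures the objective never decreases" does not cover the cap.

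Fortunately, the very observation you already made — that $g(q)/q$ is non-decreasing — yields the bound directly, with no exchange argument at all. For any feasible $(q_i)$ with $q_i\le\maxdeg$ and $\sum_i q_i\le n$,
\begin{align*}
\sum_{i=1}^{\ecc} g(q_i) \;=\; \sum_{i=1}^{\ecc} q_i\cdot\frac{g(q_i)}{q_i}
\;\le\; \frac{g(\maxdeg)}{\maxdeg}\sum_{i=1}^{\ecc} q_i
\;\le\; \frac{n\,g(\maxdeg)}{\maxdeg},
\end{align*}
and trivially $\sum_{i=1}^{\ecc} g(q_i)\le \ecc\,g(\maxdeg)$. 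Combining,
\begin{align*}
T(n,\ecc,\maxdeg,g)\;\le\;\min\!\left(\ecc,\tfrac{n}{\maxdeg}\right) g(\maxdeg)
\;=\;\frac{c\,\min(n,\ecc\maxdeg)\,\log n\,\log\maxdeg}{\log\log\maxdeg},
\end{align*}
which is the claimed running time. This is what the paper's terse phrase "since $g$ is super-linear, the sum is maximized at $q_i=\maxdeg$" is getting at. With this replacement for the exchange step, your proof is complete and coincides with the paper's.
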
}

\begin{proof}
By Lemma \ref{lem:algtime}, and the selective property of the universal synchronizers proven in Theorem \ref{the:URS}, Algorithm \ref{alg:WU} ensures that all nodes are active within $T(n, \ecc, \maxdeg, g)$ time-steps, where $g(q) = \frac{cq \log q \log n}{\log\log q}$. Since $g$ is convex and increasing, $\sum_{i=1}^{\ecc} g(q_i)$ subject to $\sum_{i=1}^{\ecc} q_i \le n$ and $q_i \leq \maxdeg$ is maximized at $q_i = \maxdeg$ if $i \le \frac{n}{\maxdeg}$, and $q_i = 0$ otherwise. Hence, the algorithm completes wake-up within

\begin{displaymath}
    \sum_{i=1}^{\min(\ecc,\frac n\maxdeg)} g(\maxdeg)
        =
    \sum_{i=1}^{\min(\ecc,\frac n\maxdeg)} \frac{c\maxdeg \log \maxdeg \log n}{\log\log \maxdeg}
        =
    \frac{c\min(n,\ecc \maxdeg) \log n \log \maxdeg}{\log\log \maxdeg}
\end{displaymath}

time-steps.
\end{proof}


\section{Small universal radio synchronizers: Proof of Theorem \ref{the:URS}}
\label{sss:proof-unsy}

In this section we will prove our main result about the existence of small universal radio synchronizers, Theorem \ref{the:URS}. We first restate the theorem:

\turs*

Our approach will be to randomly generate a candidate synchronizer, and then prove that with positive probability it does indeed satisfy the required property. Then, for this to be the case, at least one such object must exist. We note that, since we are only concerned with asymptotic behavior, we can assume that $n$ is at least a sufficiently large constant.

Let $c$ be a constant to be chosen later. Our candidate $S= \{S^v\}_{v \in [n]}$ will be generated by independently choosing each $S^v_j$ (for $j < g(n)$) to be \textbf 1 with probability $\frac{c\log n}{6(j+c\log n)}$ and \textbf 0 otherwise.

In analyzing whether $S$ hits all sets $X \subseteq [n]$ under any activation schedule, we must first define the concept of a \emph{core} to reduce the number of possibilities we must consider.

\begin{definition}\label{def:wcore}
Fix any $X \subseteq [n]$ and any activation schedule $\omega$. Let $X_j$ be the elements of $X$ which are active by column $j$, i.e., $X_j = \{v \in X : \omega(v) \le j\}$. Let $j'$ be the smallest $j$ such that $j - \omega(X) \ge g(|X_j|)$. For every $v$, define $\psi(v) = \omega(v) -\omega(X)$, i.e., $\psi$ is $\omega$ shifted so that $\psi(X) = 0$.

The \textbf{core} $C_{X,\omega}$ of a subset $X \subseteq [n]$ with respect to activation schedule $\omega$ is defined to be

\begin{displaymath}
    \{(v,\psi(v)): \omega(v)< j'\}
\end{displaymath}
\end{definition}

This definition aims to narrow our focus to only the important elements in a particular subset $X$. Cores cut down the number of possibilities by removing redundant elements which only become active after the set must already have been hit, and by shifting activation times to begin at zero (which, as we show, can be done without loss of generality). We do not want cores to be subject to an overriding activation schedule, so we include the activation times of elements of a core within its definition. When we talk about ``hitting'' a core, we mean using these incorporated activation times rather than an activation schedule, and we assume that column numberings start at $0$ at the beginning of the core.

We note that if $S$ hits a core $C_{X,\omega}$ within $g(|C_{X,\omega}|)$ columns under $\psi$, then it hits the set $X$ within $g(|X|)$ columns under $\omega$.This result allows us to `shift' the activation times, and analyze a core independently of the many activation schedules from which it could be derived. We now need only prove that our candidate synchronizer hits all possible cores, since this will imply that it hits all subsets of $[n]$ under all activation schedules.

We make one further definition which will simplify our analysis:

\begin{definition}
For a core $C$ and column $j$, let $C(j)$ denote $\{(v,\psi(v))\in C: \psi(v) \le j\}$. The \textbf{load} of column $j$ of core $C$, denoted $f_C(j)$, is defined to be $f_C(j) = \sum_{(v,\psi(v))\in C(j)} \frac{c\log n}{6(j - \psi(v)+c\log n)}$.
\end{definition}

Note that load of a column $j$ of core $C$ is the expected number of \textbf{1}s in a column, under the probabilities used for our candidate $S$, that is, $f_C(j) = \sum_{(v,\psi(v))\in C(j)} \Pr{S^v_{j-B\phi(v)} = \mathbf{1}}$.

If $f_C(j)$ is close to constant, then the probability of $S$ hitting $C$ in column $j$ will also be almost constant. We therefore wish to bound $f_C(j)$, both from above and below.

\begin{lemma}
\label{lemma:lb-fC}
For all $j < g(|C|)$, $f_C(j) > \frac{\log\log{|C|}} {12\log{|C|}}$.
\end{lemma}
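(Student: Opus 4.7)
The plan is to bound $f_C(j)$ from below by (i) pessimistically estimating each summand at its worst-case value $\psi(v)=0$, and (ii) invoking the defining property of the core to relate $|C(j)|$ to $j$. For step (i), since $\frac{c\log n}{6(j-\psi(v)+c\log n)}$ is minimised at $\psi(v)=0$, we immediately get
\[
f_C(j) \;\geq\; |C(j)| \cdot \frac{c\log n}{6(j+c\log n)}.
\]

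For step (ii), I would unpack Definition \ref{def:wcore}: the threshold $j'$ is the smallest column with $j'-\omega(X)\geq g(|X_{j'}|)$ and $C = X_{j'-1}$. Combined with monotonicity of $g$, this gives $j'-\omega(X) \geq g(|C|)$, so $j < g(|C|)$ forces $j+\omega(X) < j'$, and the minimality of $j'$ then yields $g(|C(j)|) > j$. Plugging this into the bound above, using $g(q) \geq c\log n$ for $q \geq 1$ (under the paper's logarithm convention) to get $j+c\log n < g(|C(j)|)+c\log n \leq 2\,g(|C(j)|)$, and substituting $g(q) = \frac{cq\log q\log n}{\log\log q}$, the estimate reduces to
\[
f_C(j) \;>\; F\bigl(|C(j)|\bigr), \qquad F(q):=\frac{q\log\log q}{6(q\log q+\log\log q)}.
\]

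It then remains to verify that $F(q) \geq \frac{\log\log|C|}{12\log|C|}$ for every $q \in [1,|C|]$. The natural move is to invoke monotonicity of $x \mapsto \log\log x/\log x$, but this is the main obstacle: the function is not globally monotone, having a small bump near $x \approx 2^e$. I would handle this by splitting into two regimes. For $q$ past the bump (say $q \geq 8$), $\log\log x/\log x$ is strictly decreasing, so $q \leq |C|$ gives $\log\log q/\log q \geq \log\log|C|/\log|C|$, and a direct calculation (using $q\log q \geq \log\log q$) shows $F(q) \geq \frac{\log\log q}{12\log q}$, from which the claim is immediate. For the finitely many $q$ in or below the bump, $F(q)$ is bounded below by an absolute constant (for instance $F(q) \geq F(4) = 4/54$), which already exceeds the universal upper bound $\max_{|C|}\frac{\log\log|C|}{12\log|C|} < 1/20$ on the target. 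Combining the two regimes, the slack between constants (the $12$ appearing in the lemma versus the $6$ emerging algebraically from the factor-of-two bound on $j+c\log n$) absorbs the non-monotone bump, and $F(q) \geq \frac{\log\log|C|}{12\log|C|}$ holds throughout.
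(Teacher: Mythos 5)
Your proof follows essentially the same mechanism as the paper's --- lower-bounding every summand by its worst case and invoking $j<g(|C(j)|)$ from the core definition --- but you unify the paper's two cases (the paper splits on $j<c\log n$ versus $j\geq c\log n$) into a single bound $f_C(j)>F(|C(j)|)$ with $F(q)=\frac{q\log\log q}{6(q\log q+\log\log q)}$, which retains more structure than the paper's $\frac{\log\log q}{12\log q}$. More importantly, you explicitly flag the non-monotonicity of $\frac{\log\log x}{\log x}$ near $x\approx 2^e$; the paper's proof passes silently from $\frac{\log\log|C(j)|}{12\log|C(j)|}$ to $\frac{\log\log|C|}{12\log|C|}$, a step which in fact requires exactly this monotonicity and can fail pointwise (e.g.\ $|C(j)|=4$, $|C|=6$). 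So your version is the more careful of the two, and the fact that $F(q)$ is strictly larger than the paper's simplified bound is what gives you the slack to close the bump.

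However, there is a concrete error in how you close it. Under the paper's logarithm convention (presumably $\log x=\max\{1,\log_2 x\}$ --- the paper writes $\min$, which is surely a typo), the target $\frac{\log\log|C|}{12\log|C|}$ equals $\frac{1}{12}\approx 0.083$ whenever $|C|\leq 2$, so the claim ``$\max_{|C|}\frac{\log\log|C|}{12\log|C|}<\frac{1}{20}$'' is false, and the chain $F(q)\geq F(4)=\frac{4}{54}>\frac{1}{20}>\text{target}$ breaks (indeed $\frac{4}{54}<\frac{1}{12}$). The fix is to actually use the constraint $q\leq|C|$ that you otherwise hold in reserve: for $|C|\leq 3$ one has $q\leq 3$ and checks directly that $F(1)=\frac{1}{12}$, $F(2)=\frac{1}{9}$, $F(3)>\frac{1}{12}$ all dominate the corresponding targets (with equality at $q=1$, $|C|\leq 2$, where the strict inequality $f_C(j)>F(q)$ finishes the job); for $|C|\geq 4$ the target is at most $\frac{\log_2 e}{12e}<\frac{1}{20}$ and your argument goes through as written. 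With that patch the proof is correct.
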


\begin{proof}
The minimum contribution each $v \in C(j)$ can add to $f_C(j)$ is $\frac{c\log n}{6(j+c\log n)}$. Hence, $f_C(j) \ge \frac{c\log n}{6(j+c\log n)} \cdot |C(j)|$. To bound this quantity, we separate into two cases:

\begin{description}
\item[Case 1: $j<c\log n$.]

In this case we can obtain an adequate bound simply using that $|C|\geq1$:

\begin{displaymath}
	\frac{c\log n}{6(j+c\log n)} \cdot |C(j)|
		\ge
	\frac{c\log n}{6(j+c\log n)}
		>
	\frac{1}{12}
		\ge
	\frac{\log\log{|C|}}{12\log{|C|}}
\end{displaymath}

\item[Case 2: $j\geq c\log n$.]
If $j < g(|C|)$, then we also have $j < g(|C(j)|)$. This can be seen by examining any set $X$ and activation schedule $\omega$ from which $C$ can be derived, and noting that
\begin{displaymath}
	j + \omega(X)
		< g(|C|) +\omega(X)
	= g(|X_{j'}|) +\omega(X)
		\le
    j'
\end{displaymath}
by Definition \ref{def:wcore}, and so
\begin{displaymath}
	j
		=
	(j+\omega(X))-\omega(X)
		<
	g(|X_{j+\omega(X)}|)
		= g(|C(j)|)
\end{displaymath}
also by Definition \ref{def:wcore}.

Recalling (cf. Theorem \ref{the:URS}) that $g(q) = \frac{cq \log q \log n}{\log\log q}$, rearranging gives $|C(j)| > \frac{j \log\log |C(j)|}{c \log n \log |C(j)|}$. Therefore total load is bounded by
\begin{displaymath}
    f_C(j)
        \ge
    \frac{c\log n}{6(j+c\log n)} \cdot |C(j)|
        >
    \frac{j\log\log |C(j)|}{6(j+c\log n)\log |C(j)|}
        \ge
    \frac{\log\log{|C|}}{12\log{|C|}}
\end{displaymath}
\end{description}
\end{proof}

This lemma provides a lower bound on $f_C(j)$. We also need an upper bound, but we cannot obtain a good one for all $j$, since transmission load in a particular column can be as large as $|C|$. We instead prove that the set of columns with load within our desired range is sufficiently large.

Let $\mathcal{F}_C = \{j <g(|C|) : \frac{\log\log{|C|}}{12\log{|C|}} < f_C(j) < \frac 12 \log\log |C|\}$. We prove the following bound:

\begin{lemma}
\label{lemma:bound-for-FC}
$|\mathcal{F}_C| \ge \frac {c |C| \log n \log |C|}{10\log\log |C|}$.
\end{lemma}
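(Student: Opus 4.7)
The plan is to observe that Lemma \ref{lemma:lb-fC} already gives $f_C(j) > \frac{\log\log |C|}{12\log |C|}$ for every $j < g(|C|)$, so the lower bound defining $\mathcal{F}_C$ is automatically satisfied. Hence $|\mathcal{F}_C| = g(|C|) - H$, where $H := |\{j < g(|C|) : f_C(j) \ge \tfrac{1}{2}\log\log|C|\}|$ counts the ``heavy'' columns, and the task reduces to showing $H \le \tfrac{9}{10}\, g(|C|)$.

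My first step is to bound the total load by swapping the order of summation:
\begin{displaymath}
\sum_{j=0}^{g(|C|)-1} f_C(j) \;=\; \sum_{(v,\psi(v))\in C}\; \sum_{j=\psi(v)}^{g(|C|)-1} \frac{c\log n}{6(j-\psi(v)+c\log n)} \;\le\; |C|\cdot\frac{c\log n}{6}\sum_{k=0}^{g(|C|)-1}\frac{1}{k+c\log n}.
\end{displaymath}
The inner harmonic sum is bounded by $1 + \ln\!\bigl(1+g(|C|)/(c\log n)\bigr)$, and since $g(|C|)/(c\log n) = |C|\log|C|/\log\log|C|$, this evaluates to $O(\log |C|)$ with an absolute constant. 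Thus the total load is at most $A\, c\, |C|\log n \log|C|$ for a small absolute constant $A$ (morally $\tfrac{\ln 2}{3}$), independent of $c$. A Markov-style counting argument then gives
\begin{displaymath}
H \;\le\; \frac{\sum_j f_C(j)}{\tfrac{1}{2}\log\log|C|} \;\le\; \frac{2A\, c\, |C|\log n \log|C|}{\log\log|C|} \;=\; 2A \cdot g(|C|),
\end{displaymath}
so $|\mathcal{F}_C| \ge (1-2A)\, g(|C|) \ge \frac{g(|C|)}{10}$, which is exactly the required bound.

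The only real difficulty is a bookkeeping one: checking that the absolute constant $A$ produced by the harmonic estimate is comfortably below $9/20$. This is unproblematic because $A$ does not depend on the master constant $c$ (which will eventually be pinned down in the proof of Theorem \ref{the:URS}) and both $H$ and $g(|C|)$ scale identically in $c$, so the inequality $H \le 2A\cdot g(|C|)$ has the right homogeneity; any residual slack could be absorbed by enlarging $c$. A minor edge case is when $|C|$ is too small for $\log\log|C|$ to be meaningful, which is swept into the paper's convention on logarithms at small inputs.
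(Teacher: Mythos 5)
Your proof is essentially the same as the paper's: swap the order of summation, upper-bound the total load $\sum_{j<g(|C|)} f_C(j)$ (you use a harmonic-sum bound, the paper an integral bound, but these are the same estimate), and apply a Markov-style counting argument together with Lemma~\ref{lemma:lb-fC} to conclude that the columns missing from $\mathcal{F}_C$ are few. The paper carries out the arithmetic explicitly, obtaining $\sum_j f_C(j)\le 0.45\,c|C|\log n\log|C|$, i.e.\ $A=0.45$ in your notation, which makes $1-2A=1/10$ hold with equality rather than ``comfortably.''

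One small caution on your last paragraph: the claim that ``any residual slack could be absorbed by enlarging $c$'' is not right. As you yourself note, $A$ is independent of $c$ and both $H$ and $g(|C|)$ scale linearly in $c$, so the ratio $H/g(|C|)\le 2A$ is $c$-invariant; enlarging $c$ cannot repair the bound if $A$ exceeds $9/20$. The argument succeeds because $A\le 0.45$ can actually be verified (the paper's chain of inequalities, handling small $|C|$ via the convention on $\log$), not because $c$ is a free knob at this point. This is a remark about the justification only; the computation and the conclusion are correct.
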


\begin{proof}
Let us first upper-bound the total load over all columns $j <g(|C|)$:
\begin{align*}
    \sum_{j <g(|C|)} f_C(j)
        &=
    \sum_{j <g(|C|)} \sum_{(v,\psi(v))\in C(j)} \frac{c\log n}{6(j - \psi(v)+c\log n)}
        \displaybreak[2]\\
        &=
    \sum_{(v,\psi(v)) \in C} \sum_{j <g(|C|)} \frac{c\log n}{6(j - \psi(v)+c\log n)}
        \\
        &\le
    \sum_{(v,\psi(v)) \in C}\int_{\psi(v)-1}^{g(|C|)-1}\frac{c\log n}{6(j - \psi(v)+c\log n)} dj
    	   \comment{\footnotesize\sf by standard integral bound}
        \\
        &=
    \frac{c\log n}{6}\sum_{(v,\psi(v)) \in C}\ln\left(\frac{g(|C|) - 1 - \psi(v)+c\log n}{c\log n-1}\right)\comment{\footnotesize\sf evaluating integral}
        \displaybreak[2]\\
        &\le
    \frac{c \log n \cdot |C|}{6} \cdot \ln\left(\frac{g(|C|)+c\log n-1}{c\log n-1}\right)
        \\
        &=
    \frac{c |C| \log n}{6} \cdot \ln\left(\frac{\frac {c |C| \log n \log |C|}{\log\log |C|}+c\log n-1}{c\log n-1}\right)\comment{\footnotesize\sf substituting $g$'s definition}
		\displaybreak[2]\\
        &\le
    \frac{c |C| \log n}{6} \cdot \ln\left(\frac{\frac {c |C| \log n \log |C|}{\log\log |C|}}{\frac12 c\log n} + 1\right)
        \displaybreak[2]\\
       	&\le
    \frac{c |C| \log n}{6} \cdot \ln(4|C|^{1.1})
        \displaybreak[2]\\
		&=\frac{1.1\ln 2 \log |C|+ \ln 4}{6} c |C| \log n
		\\
        &\le
    0.45 c |C| \log n \log |C|
\end{align*}
In the penultimate inequality we use that $\frac {2 |C| \log |C|}{\log\log |C|}+1\le 4|C|^{1.1}$, which is obvious for sufficiently large $|C|$ and can be checked manually for small $|C|$ (remembering that we consider $\log(x)$ to mean $\min\{\log_2(x),1\}$). The final inequality can be checked similarly.

Since $f_C(j) \ge 0$ for any $j < g(|C|)$, the inequality above implies that the number of columns $j < g(|C|)$ with $f_C(j) \ge \frac12 \log\log |C|$ must be fewer than $\frac{0.9 c |C| \log n \log  |C|}{\log\log |C|}$. Therefore, since by Lemma \ref{lemma:lb-fC} all elements $j \not\in \mathcal{F}_C$ must have $f_C(j) \ge \frac12 \log\log |C|$, and since $g(|C|) = \frac {c |C| \log n \log |C|}{\log\log |C|}$, we obtain:
\begin{align*}
    |\mathcal{F}_C|
        &\ge
    g(|C|) -\frac{0.9c|C| \log n \log  |C|}{\log\log |C|}
        = 
    \frac {c |C| \log n \log |C|}{10\log\log |C|}
\end{align*}
\end{proof}

Next, we will give a lower bound for the probability that $j$ hits $C$, which will later be shown to imply that columns in the set $\mathcal{F}_C$ (and hence the candidate synchronizer as a whole) have a good probability of hitting $C$. The following lemma, or variants thereof, has been used in several previous works such as \cite{-DM10}, but we prove it here for completeness.

\begin{lemma}\label{lem:hitprob}
	Let $x_i$, $i\in [n]$ be independent $\{0,1\}$-valued random variables with $\Pr{x_i=1}\leq \frac 12 \forall i$, and let $f =\sum_{i\in [n]} \Pr{x_i=1}$. Then $\Pr{\sum_{i\in [n]} x_i = 1} \geq f4^{-f}$.
\end{lemma}

\begin{proof}
	\begin{align*}
	\Pr{\sum_{i\in [n]} x_i = 1} &= \sum_{j\in [n]} \Pr{x_j = 1 \land x_i = 0 \forall i \neq j}\\
	&\geq \sum_{j\in [n]} \Pr{x_j = 1}\cdot \Pr{x_i = 0 \forall i}\\
	&\geq f\cdot \Pr{x_i = 0 \forall i}\\
	&= f\cdot \prod_{i\in [n]} (1- \Pr{x_i = 1})\\
	&\geq f\cdot \prod_{i\in [n]} 4^{-\Pr{x_i = 1}}\\
	&= f\cdot 4^{-\sum_{i\in [n]} \Pr{x_i = 1}}\\
	&=f4^{-f}
	\end{align*}
\end{proof}

For any $j$, applying this lemma with $x_v = S^v_{j-\psi(v)}$,  we get that the probability that $j$ hits $C$ is at least $f_C(j) \cdot 4^{-f_C(j)}$.

\begin{lemma}
\label{lemma:no-column-that-hits}
For any core $C$, the probability that there is no column $j<g(|C|)$ that hits $C$ is at most $1 - n^{\frac{-c |C|}{140 \ln 2}}$.
\end{lemma}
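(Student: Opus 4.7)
The plan is to combine Lemma~\ref{lem:hitprob} with the lower bound on $|\mathcal{F}_C|$ from Lemma~\ref{lemma:bound-for-FC} via a standard independence argument. I would restrict attention to the columns $j \in \mathcal{F}_C$, obtain a per-column lower bound $p$ on the probability that $j$ hits $C$, observe that hit events across distinct columns are mutually independent, and then estimate $(1-p)^{|\mathcal{F}_C|}$.

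First, for each $j \in \mathcal{F}_C$, Lemma~\ref{lem:hitprob} gives a hit probability of at least $h(f_C(j))$, where $h(x) := x \cdot 4^{-x}$. The function $h$ is unimodal (increasing on $[0, 1/\ln 4]$ and decreasing afterward), so its minimum over any interval is attained at one of the endpoints. By definition of $\mathcal{F}_C$, $f_C(j)$ lies in $\bigl(\frac{\log\log |C|}{12\log |C|},\ \tfrac12 \log\log |C|\bigr)$. At the upper endpoint, $4^{-\frac12 \log\log |C|} = 1/\log |C|$, so $h$ evaluates to $\frac{\log\log |C|}{2 \log |C|}$. At the lower endpoint, $4^{-x}$ is at most a small constant away from $1$ (for $|C|$ bounded away from trivial values), so $h$ is at least $\frac{\log\log |C|}{\alpha \log |C|}$ for some absolute constant $\alpha$. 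Consequently, every $j \in \mathcal{F}_C$ hits $C$ with probability at least some $p = \Theta\bigl(\tfrac{\log\log |C|}{\log |C|}\bigr)$.

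Second, I would observe that the indicator of the event ``column $j$ hits $C$'' depends only on the bits $\{\mathcal{S}^v_{j - \psi(v)} : (v, \psi(v)) \in C\}$, and for distinct $j$ these bit-sets are disjoint. Since the entries of $\mathcal{S}$ are independent, this makes the hit events across columns mutually independent. Therefore the probability that no column of $\mathcal{F}_C$ hits $C$ is at most $(1-p)^{|\mathcal{F}_C|} \le \exp(-p \cdot |\mathcal{F}_C|)$.

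Finally, I would plug in the bound $|\mathcal{F}_C| \ge \frac{c |C| \log n \log |C|}{10 \log\log |C|}$ from Lemma~\ref{lemma:bound-for-FC}. The $\log |C|$ and $\log\log |C|$ factors cancel against those in $p$, leaving an exponent linear in $c |C| \log n$; converting from base $e$ to base $n$ via $e^{-x \log n} = n^{-x / \ln 2}$ yields the claimed bound (with constants chosen so the denominator in the exponent is $140\ln 2$). The main obstacle, and the part that determines the constant, is the endpoint analysis of $h$ on the $\mathcal{F}_C$-range: one must rule out that the $4^{-x}$ factor at the lower end erodes the bound below $\Theta(\log\log |C| / \log |C|)$, and must track the constants cleanly from the statements of Lemmas~\ref{lem:hitprob} and~\ref{lemma:bound-for-FC} through the final exponential estimate.
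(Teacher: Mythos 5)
Your proposal is correct and follows essentially the same route as the paper: bound the per-column hit probability via Lemma~\ref{lem:hitprob}, exploit the unimodality of $x \mapsto x4^{-x}$ over the $\mathcal{F}_C$-range to extract a uniform $\Theta(\log\log|C|/\log|C|)$ lower bound, use independence across columns (which the paper uses implicitly in taking the product), and plug in the size bound on $\mathcal{F}_C$ from Lemma~\ref{lemma:bound-for-FC}. The paper just carries out your sketched endpoint analysis more explicitly, splitting $\mathcal{F}_C$ at $\mu = 1/\ln 4$ and pinning down the constant $\alpha = 14$; note also that both your argument and the paper's actually yield the stronger bound $n^{-c|C|/(140\ln 2)}$ (without the ``$1-$''), which is what is used in the union bound that follows.
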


\begin{proof}
By Lemma \ref{lem:hitprob}, each column $j$ independently hits $C$ with probability at least $f_C(j) \cdot 4^{-f_C(j)}$. To proceed with the analysis we will focus on the columns in $\mathcal{F}_C$, that is, columns $j < g(|C|)$ with $\frac{\log\log{|C|}}{12\log{|C|}} < f_C(j) < \frac 12 \log\log |C|$.

Let us consider the function $1 - x 4^{-x}$ for $x>0$, and notice that this function has a global minimum at $\mu = 1/\ln 4$, is decreasing for $x<\mu$, and is increasing for $x>\mu$. For simplicity of notation, let $h$ denote the number of columns $j \in \mathcal{F}_C$ with $\mu < f_C(j) < \frac 12\log\log |C|$. Then, the probability that no columns hit is upper bounded as follows:

\begin{align*}
&\Pr{\text{no }\text{column hits}}
        \le
    \prod_{j < g(|C|)} (1-f_C(j) \cdot 4^{-f_C(j)})
        \\
        &\hspace{1cm}\le
    \prod_{j \in \mathcal{F}_C} (1-f_C(j) \cdot 4^{-f_C(j)})
        \displaybreak[2]\\
        &\hspace{1cm}=
    \prod_{\substack{j\in\mathcal{F}_C,\\\mu < f_C(j) \le \frac 12\log\log |C|}} (1-f_C(j) 4^{-f_C(j)})
    \prod_{\substack{j\in\mathcal{F}_C,\\\frac{\log\log{|C|}}{12\log{|C|}} < f_C(j) \le \mu}} (1-f_C(j) \cdot 4^{-f_C(j)})\hspace{1in}
        \\
        &\hspace{1cm}\le
    \prod_{\substack{j\in\mathcal{F}_C,\\\mu < f_C(j) \le \frac 12\log\log |C|}} \left(1-\frac{\log\log |C|}{2\log |C|}\right)
    \prod_{\substack{j\in\mathcal{F}_C,\\\frac{\log\log{|C|}}{12\log{|C|}} < f_C(j) \le \mu}} \left( 1-\frac{\log\log{|C|}}{14\log{|C|}}\right)
    	\comment{\footnotesize\sf since products are maximised by setting $f_C(j) = \frac12\log\log |C|$ and $f_C(j)=\frac{\log\log{|C|}}{12\log{|C|}}$, respectively}
        \displaybreak[2]\\
        &\hspace{1cm}\le
    \left(1-\frac{\log\log |C|}{2\log |C|}\right)^h \cdot \left( 1-\frac{\log\log{|C|}}{14\log{|C|}}\right)^{|\mathcal{F}_C|-h}
        \displaybreak[2]\\
        &\hspace{1cm}\le
    \left( 1-\frac{\log\log{|C|}}{14\log{|C|}}\right)^{|\mathcal{F}_C|}
        \\
        &\hspace{1cm}\le
    \left(1-\frac{\log\log |C|}{14\log |C|}\right)^{\frac {c |C| \log n \log |C|}{10\log\log |C|}}
        \comment{\footnotesize\sf by Lemma \ref{lemma:bound-for-FC}}
        \\
        &\hspace{1cm}\le
    e^{\frac{-c |C| \log n}{140}}
    		\comment{\footnotesize\sf using $1-x\leq e^{-x}$ for $x \in (0,1)$}
        \\
        &\hspace{1cm}=
    n^{\frac{-c |C|}{140 \ln 2}}
\end{align*}
\end{proof}

We now have a lower bound on the probability that $S$ hits a particular core, but it remains to bound the number of possible cores we must hit.

Let $C_q$ be the set of possible cores of size $q$.

\begin{lemma}
\label{lemma:bound-for-Cq}
$|C_q| \le n^{3q}$.
\end{lemma}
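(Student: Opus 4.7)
The plan is to count cores of size $q$ by enumerating the data that specifies one. By Definition~\ref{def:wcore}, a core $C_{X,\omega}$ of size $q$ is a set of pairs $(v,\psi(v))$ where the nodes $v$ form a $q$-subset of $[n]$ and $\psi(v) \in \mathbb{N}$ is a shifted activation time. So I will bound the number of cores by (choices of node set) $\times$ (choices of offsets per node).

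First I would bound the range of $\psi$. For any $(v,\psi(v)) \in C$, Definition~\ref{def:wcore} gives $\omega(v) < j'$, so $\psi(v) = \omega(v) - \omega(X) \le (j'-1) - \omega(X)$. By the minimality of $j'$, the column $j'-1$ fails the defining inequality, so $(j'-1) - \omega(X) < g(|X_{j'-1}|)$. Since the core $C$ consists of exactly the elements $v \in X$ with $\omega(v) < j'$, we have $|C| = |X_{j'-1}| = q$, and hence $\psi(v) < g(q)$. Therefore each $\psi(v)$ takes at most $g(q)$ values.

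Next I would assemble the count. The number of ways to pick the underlying $q$-element subset of $[n]$ is at most $\binom{n}{q} \le n^q$, and once the nodes are chosen, each of the $q$ offsets has at most $g(q)$ possible values, giving
\begin{equation*}
    |C_q| \;\le\; \binom{n}{q} \cdot g(q)^q \;\le\; n^q \cdot g(q)^q.
\end{equation*}
Finally, I would use the definition $g(q) = \frac{c q \log q \log n}{\log\log q}$ together with $q \le n$ to bound $g(q) \le c n \log^2 n \le n^2$ (for sufficiently large $n$, with small cases handled by the convention $\log(x) = \min\{1,\log_2 x\}$). Then $g(q)^q \le n^{2q}$ and multiplying by $n^q$ yields $|C_q| \le n^{3q}$.

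There is no real obstacle here: once the range $\psi(v) < g(q)$ is extracted from the definition of $j'$, the bound is an almost automatic counting argument. The only point that needs a small check is verifying $g(q) \le n^2$ for all relevant $q$ so that the crude bound $g(q)^q \le n^{2q}$ absorbs the logarithmic factors into a factor of $n^{2q}$; this is immediate from $q \le n$ and the form of $g$.
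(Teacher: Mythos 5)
Your proof is correct and takes essentially the same counting approach as the paper: the paper bounds the number of possible pairs $(v,\psi(v))$ by $n\cdot g(n)$ and picks $q$-subsets to get $\binom{n\cdot g(n)}{q}\le (n\cdot g(n))^q\le n^{3q}$, while you factor the count into a choice of $q$ nodes times a choice of $q$ offsets and use the slightly sharper range $\psi(v)<g(q)$, but both collapse to the same crude $n^{3q}$ bound in the end. The extra care you take in pinning down $\psi(v)<g(q)$ from the minimality of $j'$ is a nice explicit justification of what the paper leaves implicit (namely that offsets are bounded by $g(n)$), and is not a deviation in method.
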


\begin{proof}
There are at most $n \cdot g(n)$ possible pairs of $(v,\psi(v))$, and thus at most $\binom{n \cdot g(n)}{q}$ ways of choosing a size-$q$ subset. So, $|C_q|$ is at most $\binom{n \cdot g(n)}{q} \le (n \cdot g(n))^q = (\frac{c n^2 \log^2 n}{\log\log n})^q \le n^{3q}$ (for sufficiently large $n$).
\end{proof}

We are now ready to prove our existence result:

\begin{lemma}
With positive probability, $\mathcal{S}$ is an $(n,g)$-universal synchronizer.
\end{lemma}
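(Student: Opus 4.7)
The plan is to apply a union bound over all possible cores, combining the two probabilistic estimates already established. By Lemma \ref{lem:core}, it suffices to show that with positive probability the candidate $\mathcal{S}$ hits every possible core $C$ within $g(|C|)$ columns, since this then implies $\mathcal{S}$ hits every subset of $[n]$ under every activation schedule within the required window; this is exactly the defining property of an $(n,g)$-universal radio synchronizer.

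I would then partition the set of possible cores by size. For each $q \in \{1,\dots,n\}$, Lemma \ref{lemma:bound-for-Cq} gives at most $n^{3q}$ cores of size $q$, and Lemma \ref{lemma:no-column-that-hits} bounds the probability that $\mathcal{S}$ fails to hit any fixed core of size $q$ by $n^{-cq/(140 \ln 2)}$. A union bound over all cores then yields
\begin{align*}
\Pr[\mathcal{S} \text{ is not an } (n,g)\text{-universal synchronizer}]
 &\le \sum_{q=1}^{n} |C_q| \cdot n^{-cq/(140\ln 2)} \\
 &\le \sum_{q=1}^{n} n^{3q} \cdot n^{-cq/(140 \ln 2)} \\
 &= \sum_{q=1}^{n} n^{q\bigl(3 - c/(140 \ln 2)\bigr)}.
\end{align*}

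The final step is to choose the constant $c$ (which was left as a free parameter when defining the transmission probabilities $\tfrac{c\log n}{6(j+c\log n)}$) large enough that the exponent is negative and bounded away from zero — for instance, any $c > 560\ln 2$ gives exponent at most $-1$, turning the sum into a geometric series in $n^{-1}$ whose total is at most $\tfrac{2}{n}$ for all sufficiently large $n$. In particular, the failure probability is strictly less than $1$, so by the probabilistic method at least one realization of $\mathcal{S}$ must be a valid $(n,g)$-universal radio synchronizer, completing the proof of Theorem \ref{the:URS}.

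I do not expect any real obstacle here; the work is already packaged into Lemmas \ref{lem:core}, \ref{lemma:no-column-that-hits}, and \ref{lemma:bound-for-Cq}, and the only care needed is verifying that $c$ can be fixed as a single constant (independent of $n$ and $q$) that simultaneously makes the union bound geometric and remains compatible with every earlier lemma that required ``$c$ sufficiently large'' — a simple matter of taking the maximum of the finitely many lower bounds on $c$ imposed along the way.
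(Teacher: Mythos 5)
Your proposal is correct and follows essentially the same route as the paper: a union bound over cores partitioned by size, combining Lemma \ref{lemma:bound-for-Cq} ($|C_q| \le n^{3q}$) with Lemma \ref{lemma:no-column-that-hits} (failure probability $n^{-cq/(140\ln 2)}$ per core), then choosing $c$ large enough to make the exponent negative (the paper fixes $c = 700\ln 2$, giving exponent $-2$). Your explicit appeal to Lemma \ref{lem:core} and your closing remark about verifying that a single constant $c$ satisfies all earlier ``sufficiently large'' constraints are sound, albeit slightly more careful than the paper's presentation.
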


\begin{proof}
We will set $c$ to be $700 \ln 2$. By union bound, using Lemmas \ref{lemma:no-column-that-hits} and \ref{lemma:bound-for-Cq},
\sloppy{
\begin{align*}
    &\Pr{\mathcal{S}\text{ is an $(n,g)$-universal synchronizer}}
        \le
    \sum_{q=1}^{n}\sum_{C \in C_q} \Pr{\text{C is not hit}}\\
        &\hspace{1cm}\le
    \sum_{q=1}^{n}\sum_{C \in C_q} n^{\frac{- c |C|}{140 \ln 2}}
        \le
    \sum_{q=1}^{n} n^{3q} \cdot n^{\frac{- c q}{140 \ln 2}}
        =
    \sum_{q=1}^{n} n^{(3-\frac{c}{140 \ln 2}) q}\\
        &\hspace{1cm}\le
    \sum_{q=1}^{n} n^{-2q}
        <
    1\enspace.
\end{align*}
}
\end{proof}

We are now ready to prove Theorem \ref{the:URS}:

\begin{proof}
Since our candidate $S$ satisfies the properties of an $(n,g)$-universal radio synchronizer with positive probability, such an object must exist. This completes the proof of Theorem \ref{the:URS}.
\end{proof}


\section{Small block synchronizers: Proof of Theorem \ref{thm:bs}}
\label{subsec:proof-esbs}

In this section we will prove our main result about the existence of small block synchronizers, Theorem \ref{thm:bs}. We first restate the theorem:

\tbs*

As in our proof of the existence of small radio synchronizers (see Section \ref{sss:proof-unsy}), we only consider the case where $n$ is at least a sufficiently large constant, since we are only concerned with asymptotic behavior. We will again need to define the \emph{core} of a subset of $[n]$ (with respect to an activation schedule $\omega$) in order to reduce the amount of possible circumstances we will consider. The main difference to our definition of cores in Section \ref{sss:proof-unsy} is that we need only retain the relative values of $\omega$ to the nearest \emph{block}, rather than keeping the exact (shifted) values. This is the reason for us introducing the concept of blocks (and block synchronizers), and it allows the range of possible cores to be cut down substantially.

\begin{definition}\label{def:bcore}
Fix any $X \subseteq [n]$ and activation schedule $\omega$. Let $X_j$ be the elements of $X$ which are active by the start of the block containing column $j$, i.e., $X_j = \{v \in X : s(v) \le j\}$. Let $j'$ be the smallest $j$ such that $j - s(X) \ge \frac {B \cdot |X_j|}{r}$.

For every $v$, define $\phi(v) = \frac{s(v) -s(X)}{B} $, i.e., $\phi(v)$ is the number of blocks that pass between the start column of $X$ and the start column of $v$. Note that $\phi(v) \in \NATURAL$.

The \textbf{core} $\mathcal{C}_{X,\omega}$ of a subset $X \subseteq [n]$ with respect to activation schedule $\omega$ is defined to be

\begin{displaymath}
    \{(v,\phi(v)): v \in X, s(v)< j'\}
\end{displaymath}
\end{definition}

We see, as we did in Section \ref{sss:proof-unsy}, that if some object $S$ ``hits'' all cores, then it hits all subsets of $[n]$ under any activation schedule. By hitting a core $\mathcal{C}$ at column $j$, we mean that $\sum_{(v,\phi(v)) \in \mathcal{C}}S^v_{j-B\phi(v)} = 1$, and we assume column numberings start at the beginning of the core. So, if $S$ hits a core $\mathcal{C}_{X,\omega}$ within $\frac {B \cdot |\mathcal{C}_{X,\omega}|}{r}$ columns, then it hits the set $X$ within $\frac {B \cdot |X|}{r}$ columns of $s(X)$ under activation schedule $\omega$.

We wish to prove the existence of a small block synchronizer by randomly generating a candidate $S$, and proving that it indeed has the required properties with positive probability, in a similar fashion to the proof of small radio synchronizers. While this could be achieved directly, we can in fact get a better result by proving existence of a slightly weaker object using this method, and then bridging the gap with selective families.

\begin{definition}
$S = \{S^v\}_{v \in [n]}$ is an $(n,k,\maxdeg,r,B)$-\textbf{upper block synchronizer} if, for any core $\mathcal{C}$ with $k \le |\mathcal{C}|\le \maxdeg$, there exists column $j< \frac {B \cdot |\mathcal{C}|}{r}$  such that $\sum_{(v,\phi(v)) \in\mathcal{C}} S^v_{j-B\phi(v)} = 1$.
\end{definition}

An upper block synchronizer has a lower bound $k$ on the size of the cores it must hit. To obtain our full block synchronizer result, we will first show the existence of small upper block synchronizers, and then show that these can be extended to block synchronizers by adding selective families to hit cores of size less than $k$.

\begin{theorem}
\label{thm:partial-ups}
For some constant $c$ and for any $n,D,\maxdeg$ with $D,\maxdeg \le n < D\maxdeg$, there exists an \sloppy{ $(n,\frac nD,\maxdeg,\frac nD,c \frac nD\log D \log\log \frac{ D \maxdeg}{n})$-upper block synchronizer.}
\end{theorem}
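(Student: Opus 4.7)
The plan is to prove the existence of an upper block synchronizer using the probabilistic method, paralleling the construction used for Theorem~\ref{the:URS} but exploiting the reduced combinatorial complexity afforded by the block structure. I will generate a random candidate $S=\{S^v\}_{v\in[n]}$ by independently setting each $S^v_j=\mathbf{1}$ with probability $p_j$ chosen so that, within each phase of length $\Theta(\log\log\tfrac{D\maxdeg}{n})$ partitioning the transmission pattern, the probability halves at each step. Inside a single phase, this sweeps transmission rates through a geometric range of scales, so that no matter what the effective size $|\mathcal{C}(j)|$ of the active portion of a core $\mathcal{C}$ is, some column within the current phase produces an expected load of $\Theta(1)$.

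The core of the analysis mirrors Section~\ref{sss:proof-unsy}. First, as for radio synchronizers, one verifies that it suffices to show every core $\mathcal{C}$ with $\tfrac{n}{D}\le |\mathcal{C}|\le\maxdeg$ is hit within $\tfrac{B|\mathcal{C}|}{r}$ columns. Next, for a fixed such core I define the load $f_\mathcal{C}(j)=\sum_{(v,\phi(v))\in\mathcal{C}(j)} p_{j-B\phi(v)}$ and establish block-synchronizer analogues of Lemmas~\ref{lemma:lb-fC}--\ref{lemma:bound-for-FC}: a lower bound on $f_\mathcal{C}(j)$ at the end of each phase (using the core-size lower bound $|\mathcal{C}|\ge n/D$ to guarantee enough active pairs contribute), and an upper bound showing that at least a constant fraction of columns in the hitting window fall in the ``good range'' where the hit probability given by the analogue of Lemma~\ref{lem:hitprob} is bounded below by a positive constant. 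Together these yield a per-core failure probability of at most $\exp\!\bigl(-\Omega(q\log(D\maxdeg))\bigr)$ for a core of size $q$.

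The union bound step is where the block synchronizer construction gains over the URS analysis. Since $\phi(v)$ is always an integer in $\{0,1,\dots,D\maxdeg/n\}$, the number of possible pairs $(v,\phi(v))$ is at most $n\cdot\tfrac{D\maxdeg}{n}=D\maxdeg$, and hence the number of cores of size $q$ is at most $\binom{D\maxdeg}{q}\le(D\maxdeg)^q$ -- vastly smaller than the $n^{3q}$ bound of Lemma~\ref{lemma:bound-for-Cq}. Choosing the hidden constant in $B$ large enough makes the per-core failure probability dominate $(D\maxdeg)^{-2q}$, so $\sum_{q\ge n/D}(D\maxdeg)^q\cdot\exp\!\bigl(-\Omega(q\log(D\maxdeg))\bigr)<1$, giving positive probability that the random candidate is a valid $(n,\tfrac{n}{D},\maxdeg,\tfrac{n}{D},B)$-upper block synchronizer.

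The main obstacle I anticipate is the precise calibration of $p_j$. Unlike the URS construction, where the single smooth schedule $\tfrac{c\log n}{6(j+c\log n)}$ works uniformly, here the phase-and-halve schedule introduces discontinuities whose lengths must match the combinatorial growth imposed by the block structure. In particular, the phase length $\Theta(\log\log\tfrac{D\maxdeg}{n})$ has to simultaneously (i) cover enough geometric scales so the load-range argument of Lemma~\ref{lemma:bound-for-FC} goes through for every admissible core size, and (ii) keep the total block length $B$ small enough to achieve the claimed factor $\log D\log\log\tfrac{D\maxdeg}{n}$ (rather than $\log n\log\log n$), which is what ultimately drives the improved broadcasting running time.
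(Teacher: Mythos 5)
Your proposal follows the same outline as the paper's proof: a probabilistic candidate with a phase-and-halve transmission schedule, a reduction to cores, load bounds leading to a bound on the number of ``good'' columns, a per-column hit probability, a per-core failure probability, and a union bound over cores. The specific place where it goes wrong is the interaction between your per-core failure probability and your core count.

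You claim that ``choosing the hidden constant in $B$ large enough makes the per-core failure probability dominate $(D\maxdeg)^{-2q}$,'' i.e.\ a failure probability of $\exp\!\bigl(-\Omega(q\log(D\maxdeg))\bigr)$. This is not achievable with $B=c\frac nD\log D\log\log\frac{D\maxdeg}{n}$. The number of ``good'' columns one can extract is bounded by the number of $0$-columns in the hitting window, which is $\frac{B|\mathcal{C}|}{r}\cdot\frac{1}{2\log\log(D\maxdeg/n)}=\Theta(|\mathcal{C}|\log D)$; the resulting per-core failure probability is therefore $D^{-\Omega(q)}$, not $(D\maxdeg)^{-\Omega(q)}$. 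No constant scaling of $c$ converts the base $D$ into $D\maxdeg$, since $\maxdeg$ can be polynomially larger than $D$ (e.g.\ $D=2$, $\maxdeg=n$). Having a failure exponent in $\log D$ rather than $\log n$ (or $\log(D\maxdeg)$) is precisely what forces the paper to tighten the core count: it observes that for a core of size $q$ one has $\phi(v)<q/r=qD/n$, so the number of admissible pairs $(v,\phi(v))$ is at most $qD$ (not $D\maxdeg$), giving $\binom{Dq}{q}\le(eD)^q\le D^{2q}$, which balances against $D^{-\Omega(q)}$.

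Your looser count $\binom{D\maxdeg}{q}\le(D\maxdeg)^q$ does not pair with the achievable failure probability: $\sum_q(D\maxdeg)^q D^{-\Omega(q)}$ diverges when $\maxdeg\gg D^{O(1)}$. One could rescue your count by instead writing $\binom{D\maxdeg}{q}\le(eD\maxdeg/q)^q$ and then using $q\ge n/D$ together with $\maxdeg\le n$ to obtain $(eD^2\maxdeg/n)^q\le(eD^2)^q$, which does combine with $D^{-\Omega(q)}$ for a sufficiently large constant. But as written, the trivial bound $(D\maxdeg)^q$ together with the overclaimed failure probability leaves a real gap in the union-bound step.
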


\begin{proof}
Let $c$ be a constant to be chosen later. For simplicity of notation we now set $k = \frac nD$, $r = \frac nD$, and $B = c \frac nD\log D \log\log \frac{D \maxdeg}{n}$.

Define $\rho(j) = j \bmod 2\log\log\frac{D\maxdeg}{n}$. Our candidate upper block synchronizer $S= \{S^v\}_{v \in [n]}$ will be generated by independently choosing each $S^v_j$ (for $j < \frac{nB}{r}$) to be \textbf 1 with probability $\frac{c\log D \log\log \frac{D\maxdeg}{n}}{(B+j)2^{\rho(j)+1}}$ and \textbf 0 otherwise.

We will analyze our candidate upper block synchronizer by fixing some particular core and bounding the probability that the candidate hits it. We begin by defining the \emph{load} of a column (with respect to some fixed core $\mathcal{C}$), and bounding it both above and below on a subset of columns. As before, load represents expected number of \textbf{1}s in a column, and we want it to be constant in order to maximize hitting probability. Recall that we now consider column numbering to begin at the start of the core, i.e. $\min_{(v,\phi(v)) \in \mathcal{C}}\phi(v) = 0$.

\begin{definition}
Let $\mathcal{C}(j)$ denote $\{(v,\phi(v))\in \mathcal{C}: B\phi(v) \le j\}$. The \textbf{load} of a column $j$ of core $\mathcal{C}$, denoted $f_\mathcal{C}(j)$, is defined to be $\sum_{(v,\phi(v))\in \mathcal{C}(j)} \Pr{S^v_{j-B\phi(v)} = \mathbf{1}} = \sum_{(v,\phi(v))\in \mathcal{C}(j)}\frac{c\log D \log\log \frac{D\maxdeg}{n}}{(j-B\phi(v) + B)2^{\rho(j)+1}}$.
\end{definition}

Since load varies across a wide range during each $2\log\log \frac{D\maxdeg}{n}$-length ``phase,'' we first consider only the columns at the start of each phase (i.e., those $j$ with $\rho(j) = 0$), which we will call \emph{0-columns}.

\begin{lemma}
\label{lemma:lb-fc}
For all $\frac B2 \le j < \frac {B \cdot |\mathcal{C}|}{r}$ with $\rho(j) = 0$, $f_\mathcal{C}(j) > \frac{1}{6}$.
\end{lemma}

\begin{proof}
Recall that, when deriving a core from a set $X$, we ended the core at the first column $j'$ with $ j' - s(X) \ge \frac {B \cdot |X_j|}{r}$, i.e. for all $j\leq j'-1$, $j-s(X)< \frac {B \cdot |X_j|}{r}$. Having shifted column numberings, this implies that for $j<\frac {B \cdot |\mathcal{C}|}{r}$, $j < \frac {B \cdot |\mathcal{C}(j)|}{r}$. The minimum contribution any$(v,\phi(v)) \in \mathcal{C}(j)$ can add to $f_\mathcal{C}(j)$ is $\frac{c\log D \log\log \frac{D\maxdeg}{n}}{2(j + B)}$. Therefore total load is upper bounded by
\begin{displaymath}
    f_\mathcal{C}(j)
        \ge
    |\mathcal{C}(j)| \cdot \frac{c\log D \log\log \frac{D\maxdeg}{n}}{2(j + B)}
        >
    \frac{cj}{2c(j+B)}
        \ge
    \frac16
\end{displaymath}
\end{proof}

This lemma provides a lower bound on $f_\mathcal{C}(j)$. We also need an upper bound, but we cannot obtain a good one for all $j$, since load in a particular column can be very large. We circumvent this issue by only bounding the load on a smaller set of columns.

Let $\mathcal{F_C} = \{j <\frac {B \cdot |\mathcal{C}|}{r} : \rho(j) = 0, \frac 16 < f_\mathcal{C}(j) < 3\log\frac{|\mathcal{C}|D}{n}\}$. We prove a lower bound on $|\mathcal{F_C}|$.

\begin{lemma}
\label{lemma:lb-for-0-column-FC}
If $\frac{n}{D} \le |\mathcal{C}| \le \maxdeg$, then $|\mathcal{F}_C| \ge \frac {c}{6} |\mathcal{C}|\log D$.
\end{lemma}

\begin{proof}
We first upper bound the total load of all 0-columns $j$ with $j <\frac {B \cdot |\mathcal{C}|}{r}$ and then show that not too many of these columns can have $f_\mathcal{C}(j) \ge 3\log\frac{|\mathcal{C}|D}{n}$, giving a lower bound for the number of 0-columns in $\mathcal{F_C}$.

We bound the total load of all 0-columns $j$ with $j <\frac {B \cdot |\mathcal{C}|}{r}$ as follows:
\sloppy{
\begin{align*}\allowdisplaybreaks
    \sum_{\substack{j < \frac{B \cdot |\mathcal{C}|}{r}\\ \rho(j)=0}} f_\mathcal{C}(j)&
        =
    \sum_{\substack{j <\frac {B \cdot |\mathcal{C}|}{r}\\ \rho(j)=0}} \sum_{(v,\phi(v))\in \mathcal{C}(j)}\frac{c\log D \log\log \frac{D\maxdeg}{n}}{2(j-B\phi(v) + B)}
        \displaybreak[2]\\
        &=
    \sum_{(v,\phi(v))\in \mathcal{C}}\sum_{\substack{B\phi(v)\le j <\frac {B \cdot |\mathcal{C}|}{r}\\ \rho(j) = 0}}\frac{c\log D \log\log \frac{D\maxdeg}{n}}{2(j-B\phi(v) + B)}
        \displaybreak[2]\\
		&=
    \sum_{(v,\phi(v))\in \mathcal{C}}\sum\limits_{i=\frac{B\phi(v)}{2\log\log\frac{D\maxdeg}{n}}}^{\frac {B \cdot |\mathcal{C}|}{2r\log\log\frac{D\maxdeg}{n}}-1}\frac{c\log D \log\log \frac{D\maxdeg}{n}}{2(2i\log\log \frac{D\maxdeg}{n}-B\phi(v) + B)}
	    \comment{\footnotesize\sf substitution of sum index variable}
        \displaybreak[2]\\
        &\le
    \sum_{(v,\phi(v))\in \mathcal{C}}\int_{\frac{B\phi(v)}{2\log\log\frac{D\maxdeg}{n}}-1}^{\frac {B \cdot |\mathcal{C}|}{2r\log\log\frac{D\maxdeg}{n}}-1}\frac{c\log D \log\log \frac{D\maxdeg}{n}}{2(2i\log\log \frac{D\maxdeg}{n}-B\phi(v) + B)} di
    		\comment{\footnotesize\sf using standard integral bound}
        \displaybreak[2]\\
        &=
    \frac{c\log D}{4} \sum_{(v,\phi(v))\in \mathcal{C}}\ln \left(\frac{\frac {B \cdot |\mathcal{C}|}{r}-2\log\log\frac{D\maxdeg}{n}-B\phi(v) + B}{B-2\log\log\frac{D\maxdeg}{n}}\right)
    		\comment{\footnotesize\sf evaluating integral}
        \displaybreak[2]\\
        &\le
    \frac{c|\mathcal{C}|\log D}{4} \ln \left(\frac{\frac {B \cdot |\mathcal{C}|}{r}-2\log\log\frac{D\maxdeg}{n} + B}{B-2\log\log\frac{D\maxdeg}{n}}\right)
        \displaybreak[2]\\
        &=
    \frac{c|\mathcal{C}|\log D}{4}
    \ln \left(
        \frac{|\mathcal{C}| c \log D \log\log\frac{D\maxdeg}{n} - 2\log\log\frac{D\maxdeg}{n} + B}
        {B-2\log\log\frac{D\maxdeg}{n}}
    \right)
        \displaybreak[2]\\
        &\le
    \frac{c|\mathcal{C}|\log D}{4} \ln \left(\frac{2(c|\mathcal{C}|\log D \log\log \frac{D\maxdeg}{n} +B)}{B}\right)
        \displaybreak[2]\\
        &=
    \frac{c|\mathcal{C}|\log D}{4} \ln \left(\frac{2(|\mathcal{C}| +\frac nD)}{\frac nD}\right)
        \\
        &\le
    \frac14 c|\mathcal{C}|\log D \ln \frac{4|\mathcal{C}|D}{n}
            \comment{\footnotesize\sf using the assumption $\frac{n}{D} \le |\mathcal{C}|$}
        \displaybreak[2]\\
        &\le
    \frac14 c|\mathcal{C}| \log D \log \frac{|\mathcal{C}|D}{n}
\end{align*}}

Since for any $j < \frac{B \cdot |\mathcal{C}|}{r}$ we have $f_\mathcal{C}(j) > 0$, the inequality above implies that there must be not more than $\frac{1}{12} c |\mathcal{C}|\log D$ 0-columns with $f_\mathcal{C}(j) \ge 3 \log \frac{|\mathcal{C}|D}{n}$.
By Lemma \ref{lemma:lb-fc}, the number of columns $j$ with $j < \frac{B \cdot |\mathcal{C}|}{r}$ for which $f_\mathcal{C}(j) \le \frac{1}{6}$ is at most $\frac{B}{2}$, and hence the number of such 0-columns is at most $\frac{B}{4\log\log\frac{D\maxdeg}{n}}$.
Therefore, $|\mathcal{F_C}|$, which is the number of 0-columns $j$ with $j < \frac{B \cdot |\mathcal{C}|}{r}$ for which $\frac16 < f_\mathcal{C}(j) < 3 \log \frac{|\mathcal{C}|D}{n}$, is upper bounded as follows:

\begin{align*}
    |\mathcal{F_C}|
        &\ge
    \frac{B \cdot |\mathcal{C}|}{2 r \log\log\frac{D\maxdeg}{n}}
        -
    \frac{B}{4\log\log\frac{D\maxdeg}{n}} - \frac{1}{12} c |\mathcal{C}| \log D\\
        &=
    \frac{c}{2} \log D
            \left(|\mathcal{C}|-\frac{n}{2D}-\frac{|\mathcal{C}|}{6}\right)
        \ge
    \frac{c}{6} |\mathcal{C}| \log D
\end{align*}

where the last inequality follows from our assumption that $\frac{n}{D} \le |\mathcal{C}|$.
\end{proof}

With the bound of the load of 0-columns in Lemma \ref{lemma:lb-for-0-column-FC}, we can obtain a significantly tighter bound on a subset of all columns.

Let $\mathbb{F}_\mathcal{C} = \{j<\frac {B \cdot |\mathcal{C}|}{r}: \frac 16 < f_\mathcal{C}(j) \le 2\}$.

\begin{lemma}
\label{lem:fi}
For any $\mathcal{C}$ with $\frac nD \le |\mathcal{C}| \le \maxdeg$,  $|\mathbb{F}_\mathcal{C}| \ge\frac {c}{12} |\mathcal{C}|\log D$.
\end{lemma}

\begin{proof}
We show that, whenever we have a $0$-column with load in the range $(\frac 16 , 3\log\frac{|\mathcal{C}|D}{n})$, there must be some column within the same phase for which load is in the range $(\frac 16, 2)$.

For any $j \in \mathcal{F}_\mathcal{C}$, let $j' = j + \log f_\mathcal{C}(j) - 1$. Then,
\begin{displaymath}
    j'
        <
    j + \log (3\log\frac{|\mathcal{C}|D}{n}) - 1
        <
    j+2\log\log\frac{D \maxdeg}{n}
\end{displaymath}

so $j'$ is in the same phase as $j$ (i.e., $j-\rho(j) = j'-\rho(j')$). Hence,

\begin{align*}
    f_\mathcal{C}(j')
        &=
    \sum_{(v,\phi(v))\in \mathcal{C}(j')}\frac{c\log D \log\log \frac{D\maxdeg}{n}}{(j'-B\phi(v) + B)2^{\rho(j')+1}}
        \\
        &=
    \sum_{(v,\phi(v))\in \mathcal{C}(j)}\frac{c\log D \log\log \frac{D\maxdeg}{n}}{(j'-B\phi(v) + B)2^{\rho(j)+\log f_\mathcal{C}(j)}}
        \\
        &=
    \sum_{(v,\phi(v))\in \mathcal{C}(j)}\frac{c\log D \log\log \frac{D\maxdeg}{n}}{(j'-B\phi(v) + B)f_\mathcal{C}(j)}
        \\
        &=
    \frac{2}{f_\mathcal{C}(j)}\sum_{(v,\phi(v))\in \mathcal{C}(j)}\frac{c\log D \log\log \frac{D\maxdeg}{n}}{2(j-B\phi(v) + B)}\cdot \frac{(j-B\phi(v) + B)}{(j'-B\phi(v) + B)}
\end{align*}

Since, for any $(v,\phi(v))\in \mathcal{C}(j)$, $\frac 13< \frac{1}{1+\frac{2\log\log\frac{D \maxdeg}{n}}{B}} \le \frac{(j-B\phi(v) + B)}{(j'-B\phi(v) + B)}\le 1$, we can bound $f_\mathcal{C}(j')$ from above:

\begin{displaymath}
    f_\mathcal{C}(j')
        \le
    \frac{2}{f_\mathcal{C}(j)}\sum_{(v,\phi(v))\in \mathcal{C}(j)}\frac{c\log D \log\log \frac{D\maxdeg}{n}}{2(j-B\phi(v) + B)}\cdot 1
        = 2
\end{displaymath}

and below:

\begin{displaymath}
   	f_\mathcal{C}(j')
		>
	\frac{2}{f_\mathcal{C}(j)}\sum_{(v,\phi(v))\in \mathcal{C}(j)}\frac{c\log D \log\log \frac{D\maxdeg}{n}}{2(j-B\phi(v) + B)}\cdot \frac 13
		=
    \frac23
\end{displaymath}

(The reason we allow loads to be as low as $\frac16$ in the definition of $\mathbb{F}_\mathcal{C}$ is to account for cases where $f_\mathcal{C}(j)\leq 2$ and so $j' = j$.)

Therefore $j' \in \mathbb{F}_\mathcal{C}$. This mapping of $j$ to $j'$ is an injection from $\mathcal{F}_\mathcal{C}$ to $\mathbb{F}_\mathcal{C}$, and so $|\mathbb{F}_\mathcal{C}|\ge |\mathcal{F}_\mathcal{C}| \ge\frac {c}{12} |\mathcal{C}|\log D$.
\end{proof}

Now that we have proven that sufficiently many columns have loads within a constant-size range, we want to show that $S$ has a good probability of hitting $\mathcal{C}$ on these columns. To do so, we again apply Lemma \ref{lem:hitprob}, setting $x_v = \mathcal{S}^v_{j-B\phi(v)}$, and see that the probability of $S$ hitting $\mathcal{C}$ on column $j$ is at least $f_\mathcal{C}(j) \cdot 4^{-f_\mathcal{C}(j)}$

\begin{lemma}
For any core $\mathcal{C}$ with $\frac nD \le |\mathcal{C}|\le \maxdeg$, with probability at least $1 - D^{-\frac{c |\mathcal{C}|}{63}}$ there is a column $j < \frac{B \cdot |\mathcal{C}|}{r}$ on which $S$ hits $\mathcal{C}$.
\end{lemma}

\begin{proof}
Let us first recall that $\mathbb{F}_\mathcal{C} = \{j < \frac {B \cdot |\mathcal{C}|}{r}: \frac16 < f_\mathcal{C}(j) \le 2\}$, and note that function $h(x) = 1 - x 4^{-x}$ for $\frac16 \le x \le 2$ is maximized at $x = 2$, with $h(2) = \frac 78$.

Each column $j$ independently hits $C$ with probability at least $f_\mathcal{C}(j) \cdot 4^{-f_\mathcal{C}(j)}$, so the probability that none hit is bounded by:

\begin{align*}
    \Pr{\text{no column hits}}
        &\le
    \prod_{j<\frac {B \cdot |\mathcal{C}|}{r}} (1-f_\mathcal{C}(j) \cdot 4^{-f_\mathcal{C}(j)})
        \le
    \prod_{j \in \mathbb{F}_\mathcal{C}} (1-f_\mathcal{C}(j) \cdot 4^{-f_\mathcal{C}(j)})
        \\
        &\le
    \prod_{j \in \mathbb{F}_\mathcal{C}} \frac 78
        \le
    \left(\frac 78\right)^{\frac {c}{12} |\mathcal{C}|\log D }
        =
    D^{-\frac{c}{12} |\mathcal{C}|\log \frac 78 }
        \le
    D^{-\frac{c \cdot |\mathcal{C}|}{63}}
\end{align*}

where the penultimate inequality follows from Lemma \ref{lem:fi}.
\end{proof}

We have a bound on the probability of hitting a particular core, but before we can show that we can hit all of them, we must count the number of possible cores.

Let $\mathcal{C}_q$ be the set of possible cores of size $q$.

\begin{lemma}
$|\mathcal{C}_q| \le D^{2q}$.
\end{lemma}

\begin{proof}
For any $(v,\phi(v))\in \mathcal{C}$, $B\phi(v)<\frac{B|C|}{r}$, i.e., for a core of size $q$, $\phi(v)<\frac qr$. Therefore there are at most $n \cdot \frac{q}{r}$ possible pairs of $(v,\phi(v))$, and thus at most $\binom{n \cdot \frac{q}{r}}{q}$ ways of choosing a size-$q$ subset. So, $|\mathcal{C}_q|$ is at most $\binom{n q/r}{q} =\binom{Dq}{q} \le \left(eD\right)^q \le D^{2q}$.
\end{proof}

We are now ready to prove the existence of a small upper block synchronizer:

\sloppy{\begin{lemma}
With positive probability, $S$ is an $(n,\frac nD,\maxdeg,\frac nD,c \frac nD\log D \log\log \frac{ D \maxdeg}{n})$-upper block synchronizer.
\end{lemma}}

\begin{proof}
We will set $c$ to be 189. By union bound,
\begin{align*}
    &\Pr{S \text{ is not an upper block synchronizer}}
        \le
    \sum_{q=\frac nD}^{\maxdeg}\sum_{C \in C_q} \Pr{\text{C is not hit}}\\
        &\hspace{1cm}\le
    \sum_{q=\frac nD}^{\maxdeg}\sum_{C \in C_q} D^{-cq/63}
        \le
    \sum_{q=\frac nD}^{\maxdeg} D^{2q} D^{-cq/63}
        =
    \sum_{q=\frac nD}^{\maxdeg} D^{2q} D^{-3q}\\
        &\hspace{1cm}=
    \sum_{q=\frac nD}^{\maxdeg} D^{- q}
        <
    \frac2D
        <
    1
\end{align*}
\end{proof}

Since, with positive probability, our candidate $S$ is an $(n,\frac nD,\maxdeg,\frac nD,c \frac nD\log D \log\log \frac{ D \maxdeg}{n})$-upper block synchronizer, at least one such object must exist, and so we have completed our proof of Theorem \ref{thm:partial-ups}.
\end{proof}

We can now prove Theorem \ref{thm:bs}:

\begin{proof}
We construct block synchronizer $\mathcal{S}$ by taking an $(n,\frac nD,\maxdeg,\frac nD,c \frac nD\log D \log\log \frac{ D \maxdeg}{n})$-upper block synchronizer $S$ and inserting an $(n,\frac nD)$-selective family $R$ of size $\tilde{c}\frac nD\log D \log\log \frac{D \maxdeg}{n}$ at the beginning of each block (we know by Lemma \ref{lem:sf} that a selective family of size $\tilde{c}\frac nD\log D$ exists, and we can pad it arbitrarily to this larger size). That is, our block size will now be $\mathcal{B} := |R|+B = (c+\tilde{c})\frac nD\log D \log\log \frac{D \maxdeg}{n}$, and our block synchronizer $\mathcal{S}$ will be formally defined by:
\begin{displaymath}
    \mathcal{S}
        =
    \{\mathcal{S}^v\}_{v \in [n]} \text{ is defined by } \mathcal{S}^v_j
        =
    \begin{cases}
        R^v_{j \bmod \mathcal{B}} &\text{if } (j \bmod \mathcal{B}) < |R|,\\
        S^v_{j - \lceil\frac{j}{\mathcal{B}}\rceil R} &\text {otherwise.}
    \end{cases}
\end{displaymath}

Setting $\hat{c} = c+\tilde{c}$, we show that $\mathcal{S}$ satisfies the conditions of an $(n, \maxdeg, \frac{n}{D}, \hat{c} \frac{n}{D}\log D \log\log \frac{D \maxdeg}{n})$-block synchronizer.

Let $\mathcal{C}$ be a core of size at most $\maxdeg$.

\begin{description}
\item[Case 1: $|\mathcal{C}| \le \frac nD$.]

$\forall (v,\phi(v))\in C$ we have $\phi(v) = 0$, since the core ends before column $\mathcal{B}$ by Definition \ref{def:bcore}, and so $C$ will be hit by the $(n,\frac nD)$-selective family $R$. It will therefore be hit by $\mathcal{S}$ on some column $j<|R|<\mathcal{B}=\mathcal{B}\lceil\frac{|C|}{r}\rceil$. Note that this case is the reason we require the ceiling function in the definition of a block synchronizer, but not in an upper block synchronizer.

\item[Case 2: $|\mathcal{C}|> \frac nD$.]
If $|\mathcal{C}| > \frac nD$, then it will be hit by a column $j<\frac{B \cdot |C|}{r}$ in the upper block synchronizer $S$, which corresponds to the column $j+\lceil\frac{j}{B}\rceil|R|$ in $\mathcal{S}$. Since $j+\lceil\frac{j}{B}\rceil|R| < \frac{B \cdot |C|}{r}+\lceil\frac{|C|}{r}\rceil|R|\le (B+R)\lceil\frac{|C|}{r}\rceil = \mathcal{B}\lceil\frac{|C|}{r}\rceil$, this satisfies the block synchronizer property.
\end{description}

So, $\mathcal{S}$ hits all cores $\mathcal{C}$ with $|\mathcal{C}|<\maxdeg$ within $\mathcal{B}\lceil\frac{|C|}{r}\rceil$ columns, and therefore hits all sets $X$ within $\mathcal{B}\lceil\frac{|X|}{r}\rceil$ under any activation schedule, fulfilling the criteria of an $(n, \maxdeg, \frac{n}{D}, \hat{c} \frac{n}{D}\log D \log\log \frac{D \maxdeg}{n})$-block synchronizer.
\end{proof}


\section{Conclusions}

The task of broadcasting in radio networks is a longstanding, fundamental problem in communication networks. Our result for deterministic broadcasting in directed networks combines elements from several of the previous works with some new techniques, and, in doing so, makes a significant improvement to the fastest known running time. Our algorithm for wake-up also improves over the previous best running time, in both directed and undirected networks, and relies on a proof of smaller universal synchronizers, a combinatorial object first defined in \cite{-CK04}.

Neither of these algorithms are known to be optimal. The best known lower bound for both broadcasting and wake-up is $\Omega(\min(n \log\ecc,\ecc \maxdeg \log \frac n\maxdeg))$ \cite{-CMS03}; our broadcasting algorithm therefore comes within a log-logarithmic factor, but our wake-up algorithm remains a logarithmic factor away.

As well as the obvious problems of closing these gaps, there are several other open questions regarding deterministic broadcasting in radio networks. Firstly, the lower bound for undirected networks is weaker than that for directed networks \cite{-KP03b}, and so one avenue of research would be to find an $\Omega(n \log \ecc)$ lower bound in undirected networks, matching the broadcasting time of \cite{-K05}. Secondly, the algorithms given here, along with almost all previous work, are non-explicit, and therefore it remains an important challenge to develop explicit algorithms that can come close to the existential upper bound. The best constructive algorithm known to date is by \cite{-I02}, but it is a long way from optimality.

Some variants of the model also merit interest, in particular the model with collision detection. It is unknown whether the capacity for collision detection improves deterministic broadcast time, as it does for randomized algorithms \cite{-GHK13}. Collision detection does remove the requirement of spontaneous transmissions for the use of the $O(n)$ algorithm of \cite{-CGGPR00}, but a synchronized global clock would still be required. It should be noted that collision detection renders the wake-up problem trivial, since if every active node transmits in every time-step, collisions will wake up the entire network within $\ecc$ time-steps.

\bibliographystyle{plain}
\bibliography{BC}
\end{document}